\newtheorem{theorem}{Theorem}
\newtheorem*{theorem*}{Theorem}
\newtheorem{lemma}[theorem]{Lemma}
\newtheorem{proposition}[theorem]{Proposition}
\newtheorem{definition}[theorem]{Definition}
\newcommand{\sansserif}[1]{%
	\ifmmode
	\mathsf{#1}%
	\else
	\textsf{#1}%
	\fi
}
\newcommand{\R}{\mathbb{R}}
\newcommand{\C}{\mathbb{C}}
\newcommand{\N}{\mathbb{N}}
\newcommand{\cP}{\mathcal{P}}
\newcommand{\cT}{\mathcal{T}}
\newcommand{\cU}{\mathcal{U}}
\newcommand{\cN}{\mathcal{N}}
\newcommand{\cM}{\mathcal{M}}
\newcommand{\cC}{\mathcal{C}}
\newcommand{\cE}{\mathcal{E}}
\newcommand{\cR}{\mathcal{R}}
\newcommand{\cL}{\mathcal{L}}
\newcommand{\cD}{\mathcal{D}}
\newcommand{\fW}{\mathfrak{W}}
\newcommand{\bnb}[1]{\mathcal{B}(#1)}
\newcommand{\den}[1]{\mathcal{D}(#1)}
\newcommand{\bdry}{\partial}
\newcommand{\comp}[1]{\overline{#1}}
\newcommand{\cptp}[1]{\sansserif{CPTP}(#1)}
\newcommand{\sep}[2]{\sansserif{SEP}(#1:#2)}
\newcommand{\sepc}[2]{\sansserif{SEPC}(#1:#2)}
\newcommand{\cp}[1]{\sansserif{CP}(#1)}
\DeclareMathOperator{\hbt}{\mathcal{H}}
\DeclareMathOperator{\tr}{Tr}
\newcommand{\idty}{\mathds{1}}
\newcommand{\idchan}{\mathcal{I}}
\newcommand{\ket}[1]{|#1\rangle}
\newcommand{\dyad}[2]{|#1\rangle \langle #2 |}
\newcommand{\proj}[1]{|#1\rangle \langle #1 |}
\newcommand{\braket}[2]{\langle #1 | #2 \rangle }
\newcommand{\bra}[1]{\langle #1 |}
\newcommand{\fid}[2]{F(#1,#2)}
\newcommand{\chbt}[1]{(\C^2)^{\otimes #1}}
\newcommand{\AL}{{\Lambda_i}}
\newcommand{\ACL}{{\comp{\Lambda_i}}}
\newcommand{\inte}[1]{\sansserif{int}(#1)}
\newcommand{\coh}[2]{I(#1 \rangle #2)}
\newcommand{\mut}[2]{I(#1 : #2)}
\newcommand{\cond}[3]{I(#1 : #2 | #3 )}
\newcommand{\ent}[1]{S(#1)}
\newcommand{\ree}[2]{E_R(#1:#2)}
\newcommand{\relent}[2]{D(#1 \| #2)}
\newcommand{\norm}[1]{\| #1 \|_1}
\DeclareMathOperator{\supp}{\sansserif{supp}}
\newcommand{\dprl}{(\!(}
\newcommand{\dprr}{)\!)}
\newcommand{\dbrl}{[\![}
\newcommand{\dbrr}{]\!]}
\newcommand{\suppress}[1]{}
\title{{\huge A lower bound on the overhead \\
		of quantum error correction in low dimensions}}
\author[1]{Nouédyn Baspin}
\author[2]{Omar Fawzi}
\author[2]{Ala Shayeghi}
\affil[1]{Centre for Engineered Quantum Systems, School of Physics, University of Sydney, New South Wales 2006, Australia}
\affil[2]{Univ Lyon, Inria, ENS Lyon, UCBL,  LIP, F-69342, Lyon Cedex 07, France}
\date{}
\begin{document}
	
	\maketitle
	
	\begin{abstract}
	We show that a quantum architecture with an error correction procedure limited to geometrically local operations incurs an overhead that grows with the system size, even if arbitrary error-free classical computation is allowed. In particular, we prove that in order to operate a quantum error correcting code in 2D at a logical error rate of $\delta$, a space overhead of $\Omega(\sqrt{\log(1/\delta)})$ is needed for any constant depolarizing noise $p > 0$.
	\end{abstract}

	\section{Introduction}
	
	The feasibility of quantum computing relies heavily on finding efficient quantum error correction (QEC) schemes. From a theoretical perspective QEC lies at the heart of the Quantum Threshold Theorem \cite{aharonov1997fault}, and in practice it generally induces costly overheads.
	Part of this cost can be attributed to the necessity of performing frequent measurements to diagnose whether a system has suffered an error. Depending on the architecture considered, those measurements can be challenging to implement, in particular for systems limited to local interactions. The space of observables one has access to is therefore limited by the space that the computer lives in. This observation leads to the following natural question: what is the tradeoff between geometry and the performance of quantum error correction? How much information can reliably be stored in a volume of space?
	
	In this work, we show that an architecture limited to geometrically local operations and classical computation incurs an overhead when using quantum error correction.
	In particular, when limited to \emph{arbitrary} 2D local operations and \emph{free classical computation}, we show that operating a quantum code protecting $k$ logical qubits up to a target error $\delta$, the number of physical qubits $m$ required satisfies
	
	\[
	m \in \Omega\left(k \sqrt{\frac{\log(1/\delta)}{\log(1/p)}}\right) \ ,
	\]
	where $p \in (0,1]$ is the depolarizing noise parameter.
	In most cases, we are interested in an error that decreases exponentially with the system size, or $\delta \sim p^{m^c}$, which gives $\frac{\log(1/\delta)}{\log(1/p)} \sim m^c$. Our bound therefore proves a lower bound on the overhead $m/k \in \Omega(m^{c/2})$.
	In general, for other geometries, our bound reads
	
	\[
	m \in \Omega\left(k \cdot  g_{\text{geom}}\left(\frac{\log(1/\delta)}{\log(1/p)}\right)\right),
	\]
	
	where $g_{\text{geom}}$ depends on the geometry. These bounds differ from existing bounds on \emph{local quantum codes} \cite{bravyi2009no,bravyi2010tradeoffs,delfosse2013tradeoffs,flammia2017limits,baspin2021connectivity} because an architecture with local operations is not necessarily limited to local codes. For example, in \cite{delfosse2021bounds} the authors demonstrate how to measure the syndrome of an arbitrary $n$ qubit sparse code in constant time using $O(n^2)$ ancillas, by making use of free classical computation. Previous attempts to bound the performance of error correction in those systems assumed that only a specific set of gates were allowed, and were limited to a subset of classical communications \cite{delfosse2021bounds}. In this work, our bounds hold for all operations that are separable\footnote{Separable operations are a strict superset of classical operations \cite{chitambar2014everything}.} between the quantum and the classical system. Finally, the methods we use here apply to any architecture that is slow at generating entanglement between its subsystems, and therefore draws a direct connection between one's ability to correct errors, and one's ability to generate entanglement.

	In what follows we review the history of no-go theorems addressing quantum error correction in low-dimensional systems. We will focus on two parameters to capture the performance of a code. First, the dimension $k$ corresponds to the number of qubits protected by the code. Secondly, the distance $d$ is defined as the minimum number of qubits that must be erased for the information to be lost.
	An $\dbrl n,k,d \dbrr$ code is defined on $n$ qubits, has dimension $k$, and distance $d$. An infinite family of codes satisfying $k \in \Omega(n)$ is said to have constant rate.
	
	Previous work addressing the tradeoff between locality and error correction has focused on codes corresponding to the ground space of a sparse frustration-free Hamiltonian. If the terms of this Hamiltonian are spatially local\ \footnote{A code is defined to be local if the terms of its associated Hamiltonian are local}, then given access to nearest neighbors interactions, we can easily verify if a system is in the codespace: it is enough to measure the terms of said Hamiltonian. A celebrated example of such codes is the family of topological codes. In the 20 years since their invention, these codes have seen a sustained theoretical and experimental interest, but no effort could improve their poor parameters $\dbrl n,1, O(\sqrt{n}) \dbrr$. This observation begs the question: are their poor parameters inherent to their locality?
	
	This question was positively resolved in 2009, ten years after Kitaev's surface code, by Bravyi, Poulin, and Terhal (BPT) \cite{bravyi2010tradeoffs}. The authors established that, in 2D, no local code can outperform the surface code: any such code is bound to obey
	\[
	n/k \in \Omega(d^2) \ .
	\]

	This result formalizes a non-trivial constraint on computation with \emph{local codes}. The ratio $n/k$ should here be understood as an overhead, and quantifies the cost of encoding one qubit. Typically, one would like the distance to grow polynomially with $n$, and thus the overhead grows as $n^{\Omega(1)}$. 
	
	It is good here to note that this overhead is not intrinsic to the nature of quantum mechanics, and it is possible to construct significantly better codes when relaxing the assumption of locality. In fact, the same year as the BPT paper, a groundbreaking result introduced a family of codes with constant rate and polynomial distance -- with parameters $\dbrl n,\Theta(n),\Theta(\sqrt{n}) \dbrr$ to be precise \cite{tillich2014quantum}. By plugging these parameters in the BPT bound, one can easily verify that those codes cannot be local. Worse: It is known that if the Hamiltonians corresponding to those codes were to act on qubits placed on a 2D lattice, they would have $\widetilde{\Omega}(n)$ terms spanning a distance $\widetilde{\Omega}(n^{1/4})$ \cite{baspin2021quantifying}. Unfortunately that feature is generic: although constant rate and polynomial distance codes bear the promise of reduced overhead, they all suffer from embarrassingly non-local terms. 
	
	However, all hope is not lost. One might try shuttling qubits around with SWAP gates to emulate theses non-local geometries, incurring some time overhead and additional errors. Another option is to trade this time overhead for a space overhead, by using a large number of ancillas. As previously mentioned, \cite{delfosse2021bounds} demonstrated how to measure the syndrome of an arbitrary $n$ qubit sparse code in constant time using $O(n^2)$ ancillas by using non-local classical computation. Lastly, quantum LDPC codes have very redundant stabilizers, could one operate such code in 2D by only measuring a subset of its stabilizers at a time?  
	
	Presently, the limitations weighing on those alternative options are poorly understood. First it is not clear whether those approaches can be made fault-tolerant and at what cost. Secondly, when deriving bounds on fault-tolerant processes, it is hard to take into account the access to error-free, non-local classical computation. Consequently, we have a rather limited understanding of the resources needed for efficient quantum error correction in low dimensions. In this paper we address this challenge by answering two questions:
	
	\begin{enumerate}
		\item Given access to arbitrary local quantum operations, is it possible to lower bound the overhead of QEC in 2D?
		\item Does this lower bound hold when given access to free classical computation?
	\end{enumerate} 
	
	The framework we use to address those questions provides a new light on the structure of quantum codes, and naturally leads to generalizations and/or stronger versions of known results.
	For example the existing bounds on the encoding/decoding complexity of quantum codes often assume either a unitary circuit \cite{bravyi2006liebrobinson}, or a restricted set of operations. They can also assume a specific structure to the code. For example \cite{aharonov2018quantum} considers a subset of topological codes, \cite{delfosse2021bounds} assume the geometry induced by the stabilizers contains some expansion. Other works have considered the question of the overhead of fault-tolerance, but with no locality restrictions, typically yielding weaker bounds \cite{fawzi2022lower}.
	
	Here we are to eschew those limitations and address the following question: 
	\begin{enumerate}
		\setcounter{enumi}{2}
		\item Given access to arbitrary local quantum operations, and free classical computation is it possible to lower bound the complexity of encoding/decoding quantum codes?
	\end{enumerate}
	We note that quantum circuits assisted by free classical computation can be surprisingly powerful, see e.g., the recent work~\cite{quek2022multivariate} on preparing GHZ states and multivariate trace estimation. We are not aware of any circuit lower bounds for this model.
	
	\subsection{Main results}
	For the sake of readability, we state our result for 2D Euclidean dimensions, although it generalizes to other geometries. Our main contribution reads as follows.
	
	\begin{theorem}[see Theorem~\ref{thm:main-overhead} for the formal version]
		\label{thm:main-informal}
		Let $\cC$ be a code encoding $k$ qubits and let $\fW$ be a 2D-local circuit on $m$ physical qubits, with non-local, error-free classical computation. If $\fW$ is subject to depolarizing noise of strength $p$ every $O(1)$ steps, while achieving a target error below $\delta \equiv p^f$, then 
		\[
		m/k \in \Omega(\sqrt{f}) \ .
		\]
	\end{theorem}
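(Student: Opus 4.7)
The plan is to exploit a tension between two geometric constraints. On one hand, a 2D-local quantum circuit aided by free classical computation can build up entanglement across any balanced bisection only at a rate proportional to the cut's boundary, which in 2D is $O(\sqrt{m})$. On the other hand, requiring a logical error $\delta = p^f$ against i.i.d.\ depolarizing noise forces every logical qubit to be spread widely enough to survive typical noise realizations, creating substantial entanglement across every such bisection. Balancing these competing requirements should produce the bound $m/k \in \Omega(\sqrt{f})$.

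I would begin by reformulating the problem as an entanglement-preservation task: pair the logical register $L$ with a $k$-qubit purifying reference $R$ and feed $|\Phi\rangle_{RL}$ into the protocol, so that the assumed logical error bound translates into the statement that the joint state of $R$ and the decoded output lies within trace distance $O(\delta)$ of $|\Phi\rangle\langle\Phi|$. I would then fix a balanced bipartition $(A,B)$ of the $m$ physical qubits induced by a straight cut of the 2D grid, with boundary $\ell = O(\sqrt{m})$, and track an entanglement monotone $E$ across the cut $A \,|\, BR$. The crucial observation is that free classical computation is separable with respect to any spatial bipartition---its quantum action decomposes into local channels on the two halves conditioned on classical messages---so it leaves $E$ invariant; only the $O(\sqrt{m})$ two-qubit gates per depth layer that straddle the cut can increase $E$, giving an upper bound $E = O(T\sqrt{m})$ after depth $T$.

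The heart of the argument is a matching lower bound on $E(A:BR)$ derived from the code's error-correction performance. The heuristic is that a contiguous 2D region of area $s$ is uncorrectable against i.i.d.\ depolarizing noise once $s \gtrsim f$, since a typical error pattern on such a region exceeds the correctable threshold with probability larger than $p^f$. Hence every logical qubit must have non-trivial support in every region of area $\Omega(f)$, and a counting argument over the $k$ logical qubits embedded in the $m$ physical qubits forces accumulated entanglement of $\Omega(k\sqrt{f})$ across any balanced bisection. Matching this lower bound against the upper bound from the previous step and bounding the circuit depth via the total noise budget (the circuit cannot run arbitrarily long without exhausting its $p^{-f}$ error budget) should yield the claimed $m/k \in \Omega(\sqrt{f})$.

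The main obstacle will be this geometric lower bound. Translating an average-case i.i.d.\ noise guarantee into a worst-case regional uncorrectability statement---in a way that survives the decoder's arbitrary non-local classical processing---is delicate: the usual distance-based or Hamiltonian-based arguments (such as BPT) do not apply, and one must instead work information-theoretically, treating the free classical computation as a separable extension of the underlying quantum state. I expect that squashed mutual information, which behaves well under classical conditioning and is both a lower bound on distillable entanglement and a monotone under LOCC, will be the appropriate monotone here.
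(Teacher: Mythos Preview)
Your high-level framework is right: an entanglement monotone is upper-bounded because local gates create at most $O(|\partial|)$ entanglement per layer and separable (classical) operations create none, while error correction against depolarizing noise forces a lower bound. But two of your concrete steps do not go through as stated.

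\textbf{The single-bipartition lower bound is the real gap.} You assert that the entanglement across one balanced cut is $\Omega(k\sqrt{f})$, but there is no mechanism in your sketch that produces the factor $\sqrt{f}$ from a single cut. ``Every logical qubit has support in every region of area $\Omega(f)$'' (even if made precise) does not by itself force $k\sqrt{f}$ units of entanglement across a fixed straight line; a code could be arranged so that the entanglement across that particular cut is only $\Theta(k)$ or even less. The paper avoids this by \emph{not} using a single cut. Instead it partitions $A$ into $\ell \sim m/f$ regions $\Gamma_i$ each of size $\sim f$ and boundary $O(\sqrt{f})$, and proves that the \emph{sum} $\sum_i E_R(\Lambda_i:\overline{\Lambda_i})$ over all regions is at least $k$ (up to $\delta/p^{|\Gamma_i|}$ corrections), via subadditivity $\sum_i S(\Lambda_i) \geq S(A') = k$ together with the approximate identity $I(\Lambda_i\rangle\overline{\Lambda_i}) \approx S(\Lambda_i)$ for approximately correctable regions. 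Combining with the per-region upper bound $E_R(\Lambda_i:\overline{\Lambda_i}) \leq O(\Delta\sqrt{f})$ gives $(m/f)\cdot\Delta\sqrt{f} \gtrsim k$, hence $m/k \in \Omega(\sqrt{f}/\Delta)$. The $\sqrt{f}$ is the boundary of a size-$f$ tile, not something extracted from one global bisection.

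\textbf{The depth does not need to be bounded via a noise budget.} Your plan to cap $T$ by arguing ``the circuit cannot run arbitrarily long'' conflicts with the premise: error correction is precisely what lets the circuit run indefinitely. The paper instead uses the noise layer as an entanglement \emph{reset}: writing $\cN_p^{\otimes m} = p^{|\Gamma|}\,(\text{fully depolarize }\Gamma) + (1-p^{|\Gamma|})\cM$, the convexity of fidelity against a pure target forces the circuit to recover even from full erasure of $\Gamma$ with error $\leq \delta/p^{|\Gamma|}$. After that erasure $E_R(\Gamma:\overline{\Gamma}X)=0$, and only the $\Delta$ layers until the next noise round can rebuild it, giving the clean upper bound $3\Delta|\partial\Gamma|$ regardless of how many total rounds $J$ there are. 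This is where the $O(1)$ in ``noise every $O(1)$ steps'' enters.

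Finally, squashed entanglement would work but is unnecessary; the paper uses the relative entropy of entanglement, which already satisfies monotonicity under separable channels, dominates the coherent information, and has the needed continuity.
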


	This result is the first lower bound on the overhead of fault tolerance in low dimensions. It confirms that obtaining exponential error suppression, i.e. $f \sim m^c$ for a constant $c$, incurs polynomial overhead in 2D: $m \in \Omega(k^{2/(2-c)})$. Similarly, constant overhead implies constant error rate, which is in line with previous observations made in \cite{delfosse2021bounds}: the authors measured the syndrome of a constant rate code using $m \sim n \sim k$ qubits, and seem to fail to suppress errors with this scheme. This also shows that the non-local operations used in the construction of constant overhead fault-tolerant schemes~\cite{gottesman2014fault,fawzi2018constant} is necessary.

	As an alternative, one might imagine implementing a constant rate LDPC code $\dbrl N,\Theta(N),D \dbrr$ by concatenating it with a local code of size $n \ $  \footnote{Model suggested by Anirudh Krishna}. Note that although the stabilizers are not local -- and therefore the BPT bound no longer applies -- one could still imagine operating it with local operations. In which case our bound gives $f \in O(n^2)$: the local code has to grow polynomially for exponential error suppression. 
	
	We also obtain a lower bound on the complexity of encoding circuits for any quantum code, and syndrome extracting circuits for any stabilizer code. Those bounds are non-trivial for any constant rate code. 
	
	\begin{theorem} [Encoding circuit depth, see Theorem~\ref{thm:encoding-lower-bound} for the formal version]
		Let $\cC$ be an $\dprl n,k,d \dprr$ quantum code and consider an encoding circuit $\fW$ for $\cC$ on $m$ qubits using arbitrary local operations and free classical computation, then the depth $\Delta$ of $\fW$ obeys
		\[
		\Delta \in \Omega (k\sqrt{d}/m) \ .
		\]
	\end{theorem}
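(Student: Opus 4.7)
The plan is to prove the claim via an entanglement light cone argument combined with a distance-based entanglement lower bound, and then to conclude via a pigeonhole argument over the 2D layout. First, I would introduce a $k$-qubit reference register $R$ that begins maximally entangled with the $k$ logical input qubits $L$ through $|\Phi\rangle_{LR}$, while the remaining $m-k$ ancillas $A$ start in $|0\rangle$. Running $\fW$ produces the pure state $|\psi\rangle_{PR} = (\fW \otimes I_R)(|\Phi\rangle_{LR} \otimes |0\rangle_A)$, and at time zero the reduced entropy on any physical region $B$ is exactly $|B \cap L|$.

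The core of the argument is a two-sided sandwich on $S(\rho_B)$. For the upper bound I would establish the light cone inequality
\[
S(\rho_B) \le |B \cap L| + c\,\Delta\,|\partial B|,
\]
for a universal constant $c$, where $|\partial B|$ counts the 2D edges across the bipartition $B|B^c$. This follows from the small-incremental-entangling principle: each two-qubit gate changes $S(\rho_B)$ by $O(1)$, and only $O(\Delta |\partial B|)$ such gates cross the cut over depth $\Delta$. The free classical computation contributes no additional entanglement across any quantum-quantum bipartition because it is separable. For the lower bound, whenever $|B^c| < d$ the erasure of $B^c$ is correctable, so by the standard equivalence between correctability and vanishing mutual information we get $I(B^c : R) = 0$; together with purity of $|\psi\rangle_{PR}$ this yields $S(\rho_B) = S(\rho_{B^c}) + S(\rho_R) \ge k$.

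To conclude I would pick $B^c$ to be a geometric disk of area $d-1$ in the 2D layout, so that $|\partial B| = O(\sqrt{d})$. Tiling the $m$-qubit lattice with $\Theta(m/d)$ disjoint such disks and pigeonholing over the $k$ logical inputs, some disk must contain at least $\Omega(kd/m)$ of them, whence $|B \cap L| \le k - \Omega(kd/m)$. The sandwich then reads $k \le k - \Omega(kd/m) + c\Delta\sqrt{d}$, yielding $\Delta \in \Omega(k\sqrt{d}/m)$. The main obstacle will be making the light cone entropy bound fully rigorous in the presence of mid-circuit measurements and non-local classical feedback: the proof has to invoke the separability structure of the allowed operations carefully, so that averaging over measurement outcomes does not spoil the entropy inequality across the $B|B^c$ cut.
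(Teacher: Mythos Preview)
Your light-cone upper bound has a real gap: the von Neumann entropy $S(\rho_B)$ of a subregion is not an entanglement monotone and can jump under separable operations even when no gate crosses $\partial B$. Take $B=\{q_1\}$ with $q_1$ isolated, let a distant qubit $q_2$ start in $|+\rangle$, measure $q_2$, store the bit in $X$, and apply a bit-flip to $q_1$ conditioned on $X$. This is one layer of LOCC with $|\partial B|=0$, yet $S(\rho_{q_1})$ goes from $0$ to $1$. So no amount of ``invoking separability carefully'' will save the inequality $S(\rho_B)\le |B\cap L|+c\Delta|\partial B|$. Relatedly, writing the output as a pure state $|\psi\rangle_{PR}$ is unjustified once measurements are present; only $\rho_{RA'}$ is (assumed) pure. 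The paper resolves exactly this by replacing $S$ with the relative entropy of entanglement $E_R(B:\overline{B}XR)$, which \emph{is} monotone under separable channels; their small-incremental-entangling lemma is stated for $E_R$, not for $S$. Your argument can be repaired by making this substitution throughout, after which the sandwich $|B\cap L|+3\Delta|\partial B|\ge E_R\ge k$ goes through.

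Even after the fix, your route differs from the paper's in two respects. First, you take one large region $B$ whose complement is a single size-$(d{-}1)$ disk and pigeonhole over the physical locations of the $k$ logical inputs $L$ to bound the initial entanglement $|B\cap L|$; the paper instead partitions $A$ into many small blocks $\Gamma_i$ of size ${<}\,d$, uses the code-structure inequality $\sum_i E_R(\Lambda_i:\overline{\Lambda_i})\ge k$ (from indistinguishability of correctable regions plus subadditivity), and pigeonholes to find one block with $E_R\ge k/\ell$. Second, and more substantively, the paper's formal theorem assumes only that the circuit starts from $|0\rangle^{\otimes m}$ and outputs \emph{some} state supported on $\cC$; there is no reference $R$ and no designated input locations $L$. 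Your argument needs those input locations for the pigeonhole step, so it proves the bound for a stronger notion of encoding circuit than the one the paper actually treats.
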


	\begin{theorem} [Syndrome extracting circuit depth, see Theorem~\ref{thm:syndrome-circuit-bound} for the formal version] 
		\label{thm:syndrome-lower-bound-informal}
		Let $\cC$ be an $\dbrl n,k,d \dbrr$ stabilizer quantum code and consider a syndrome extracting circuit $\fW$ for $\cC$ on $m$ qubits using arbitrary local operations and free classical computation, then the depth $\Delta$ of $\fW$ obeys
		\[
		\Delta \in \Omega ({k\sqrt{d}}/{m}) \ .
		\]
	\end{theorem}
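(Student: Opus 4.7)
The plan is to reduce the desired bound to the encoding circuit bound of Theorem~\ref{thm:encoding-lower-bound}. The two statements have identical depth bounds, which strongly suggests that one implies the other via a constant-overhead reduction preserving the 2D-local plus free-classical-computation model.

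\textbf{The reduction.} Given a syndrome extracting circuit $\fW$ of depth $\Delta$ for $\cC$ on $m$ qubits, I will construct an encoding circuit for $\cC$ of depth $\Delta + O(1)$ on the same $m$ qubits. The construction is measurement-based state preparation: initialize $k$ designated ``logical'' physical qubits in the arbitrary raw input $\ket{\bar{\psi}}$, place the remaining $n-k$ code qubits in a balanced state with respect to the stabilizer group (for a CSS code in standard form, computational-basis zeros on the $Z$-stabilizer support and $\ket{+}$ states on the $X$-stabilizer support), and reset the $m-n$ ancillas to the state expected by $\fW$. Apply $\fW$; read the resulting syndrome $s$ from the ancillas; using free classical computation, determine a Pauli correction $P(s)$ that maps $s$ to the trivial syndrome; and apply $P(s)$ as one layer of single-qubit gates. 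The output is a codeword of $\cC$ whose logical content is $\ket{\bar{\psi}}$. Since every piece added around $\fW$ is single-qubit quantum gates plus free classical processing, the combined circuit remains 2D-local with free classical computation.

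\textbf{Finishing the argument.} The resulting circuit is a valid encoding circuit for $\cC$ of depth $\Delta + O(1)$, so Theorem~\ref{thm:encoding-lower-bound} gives $\Delta + O(1) \in \Omega(k\sqrt{d}/m)$, which is equivalent to the claimed bound (the additive constant is absorbed into the $\Omega$).

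\textbf{Main obstacle.} The delicate step is the initial-state construction for a general, non-CSS stabilizer code: one must pick a preparation whose post-measurement-and-correction form faithfully encodes $\ket{\bar{\psi}}$ in the canonical logical basis. My plan is to first relabel the physical qubits---a zero-depth permutation---to bring the stabilizer generators and logical Pauli operators into a standard symplectic form, so that $k$ physical qubits serve as ``raw'' logical qubits and the remaining $n-k$ play the role of the gauge; the corresponding gauge state is then prepared with $O(1)$ single-qubit operations. Checking that this re-labeling is a mere renaming (so it affects neither the 2D-locality constraint nor the constants in the encoding bound) and that the induced preparation is indeed constant-depth is the most careful part of the proof.
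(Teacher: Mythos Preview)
Your reduction is exactly the paper's: run $\fW$, append one separable layer that applies a syndrome-dependent Pauli correction, and invoke Theorem~\ref{thm:encoding-lower-bound} on the resulting depth-$(\Delta+1)$ circuit.

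The ``main obstacle'' you flag, however, is self-inflicted. Theorem~\ref{thm:encoding-lower-bound} does not ask for an encoder of a prescribed logical state $\ket{\bar\psi}$; its hypothesis is only that the circuit, run on the input $\proj{0}_A\otimes\proj{0}_X$, produce on $A'$ a state supported on $\cC$. The paper therefore simply feeds $\proj{0}^{\otimes m}$ into $\fW$, obtaining $\sum_s \Pi_s\proj{0}^{\otimes n}\Pi_s\otimes\proj{s}$ on $A'X$, and appends the separable layer with Kraus operators $\{P_s\otimes\ket{0}\bra{s}\}_s$, where $P_s\in\cP_n$ is any Pauli mapping $\cC_s$ to $\cC$. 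The resulting $A'$-marginal $\sum_s P_s\Pi_s\proj{0}^{\otimes n}\Pi_s P_s^\dagger$ is supported on $\cC$, and Theorem~\ref{thm:encoding-lower-bound} applies directly. No standard-form analysis, no non-CSS initial-state design, and no qubit relabelling is needed.
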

	Note how these bounds parallel Theorem \ref{thm:main-informal}.
	The second bound is tight in the regime $m \in O(n)$. In fact, in Section VII of \cite{bacon2015sparse}, it is shown how to measure the syndrome of any LDPC code on $n$ qubits in $O(\sqrt{n})$ time. By taking a good LDPC code \cite{panteleev2021good, leverrier2022quantum} we obtain $\Delta\in \Omega(n \sqrt{n}/n)$, or $\Delta \in \Omega(\sqrt{n})$. Similarly, in the regime $\Delta \in O(1)$, for a good code \cite{panteleev2021good,leverrier2022quantum}, Theorem \ref{thm:syndrome-lower-bound-informal} gives $m \in \Omega(n^{3/2})$, while \cite{delfosse2021bounds} provides a method to do it in $m \in O(n^2)$.

	\subsection{Open questions}
	
	\begin{enumerate}
		
	\item A crucial element in the proof of our main theorem is the observation from Section \ref{subsec:upper-bound} that noisy local circuits have a limited ability to create entanglement. Along the same line of thought one could ask for a precise characterization of entanglement in noisy circuits \cite{dalzell2021random}. As a point of comparison, much has been written regarding the existence an area law in the ground state of local hamiltonians \cite{hastings2007area}, does a similar area law exist for sufficiently deep noisy local circuits? The answer might depend on the metric of choice, but for definiteness, one could ask if for any subset of qubits $\Lambda$ we have $\ree{\Lambda}{\comp{\Lambda}} \in O(\sqrt{|\Lambda|})$ when the circuit runs for more than $\sansserif{polylog}$ time\footnote{Note that for this to hold, we cannot allow classical operations for free: consider $n^{1-\alpha}$ patches of surface code, each of size $ n^\alpha$, for $\frac{1}{2}< \alpha < 1$. Those patches can preserve $\sim n^{1-\alpha}$ Bell pairs for an exponential amount of time.}? This is reminiscent of the fact that noisy \emph{unitary} circuits converge to the maximally mixed state in $\sansserif{polylog}$ time \cite{aharonov1996limitations,muller2016relative}. 
		
		\item Our notion of error rate is quite restrictive: we require the output of the circuit to be close in fidelity to the input state. In practice, for quantum computation, this might not be necessary: for exemple instead of recovering the original state, one might want to measure a logical Pauli observable up to a small error. Can our techniques be adapted to a setting where the definition of the error rate is less restrictive?
		
		\item As discussed in the introduction, local codes can only provide $m/k \sim f^2$, while our bound sits at $m/k \sim \sqrt{f}$, which begs the question: can one actually propose an error correction scheme that is local, yet achieves $m/k \sim \sqrt{f}$ by making use of classical communications, or can our bound be tightened?
		
		\item In the regime $\Delta \in O(1)$, and $m \in O(n)$, our bound Theorem~\ref{thm:syndrome-lower-bound-informal} can be understood as a bound on local stabilizer codes. In that case, we obtain $k\sqrt{d} \in O(n)$, which is far from $k d^2 \in O(n)$ of \cite{bravyi2010tradeoffs}, or even $kd \in \tilde{O}(n)$ of \cite{baspin2021quantifying}. Can one find an intuitive explanation for this difference? In their proofs, \cite{bravyi2010tradeoffs,baspin2021quantifying} only need to focus on two correctable regions, while in this work we typically deal with $m/d$ correctable regions. 
		
	\end{enumerate}
	
	\section{Preliminaries}
	
	For a Hilbert space $\hbt$, $\bnb{\hbt}$ is the space of bounded linear operators on $\hbt$ and $ \den{\hbt}$ is the set of density operators on $\hbt$, i.e., positive semidefinite operators with unit trace. We write $\idty$ for the identity operator in $\hbt$. For a linear operator $\rho$ on $\hbt$, its support is the orthogonal complement of its kernel. We write $\cptp{\hbt_1, \hbt_2}$ the set of completely positive trace-preserving linear maps from $\bnb{\hbt_1}$ to $\bnb{\hbt_2}$ (also called quantum channels) and $\cp{\hbt_1, \hbt_2}$ for completely positive linear maps. When $\hbt_1 = \hbt_2$, we simply write $\cptp{\hbt}$ and $\cp{\hbt}$. We denote $\idchan \in \cptp{\hbt}$ the identity channel on the space of linear operators on $\hbt$.
	
	We are going to consider quantum states and channels that act on multiple systems. The systems will be labelled $A, B, X,\dots$ and should be thought of as labels for a collection of information carrying systems. As such, mathematically, these systems are finite sets. The states of such systems are described by a density operator on the corresponding Hilbert spaces, which we write as $\hbt_{A}, \hbt_{B}, \hbt_{X},\dots$. For example, a state on the systems $AX$ (which should be understood as the union of the sets $A$ and $X$) will be described by an element in $\den{\hbt_{A} \otimes \hbt_{X}}$. We will often include the systems on which the states or channels act as a subscript, e.g., $\rho_{AX} \in \den{\hbt_{A} \otimes \hbt_{X}}$ for the state on $AX$, $\rho_{A} \equiv \tr_{X}(\rho_{AX})$ and $\idchan_{A} \in \cptp{\hbt_{A}}$ the identity channel on the system $A$.
	
	Additionally, we use $\norm{\cdot}$ to denote the trace norm, and $\fid{\rho}{\sigma} = \left(\tr\left(\rho^{\frac{1}{2}} \sigma \rho^{\frac{1}{2}}\right)^{\frac{1}{2}}\right)^2$ for the Uhlmann fidelity \cite{wilde2017quantum}. We use the standard inequalities:
	\begin{align*}
	2(1-\sqrt{\fid{\rho}{\sigma}}) \leq \norm{\rho - \sigma} \leq 2 \sqrt{1-\fid{\rho}{\sigma}} \ ,
	\end{align*}
	and when one of the two states is pure the inequality can be improved to $2(1-\fid{\rho}{\sigma}) \leq \norm{\rho - \sigma}$.
	
	\subsection{Distance and entropic measures}

	One can draw an equivalence between preserving information and preserving entanglement with another party. This is formalized in the following lemma.
	\begin{lemma}[Theorem 2 of \cite{barnum2000quantum}]
		\label{lem:barnum-et-al}
		Let $\cC$ be a subspace of $\hbt$. Let $\cE \in \cptp{\hbt}$ be a quantum channel such that for all states $\ket{\psi} \in \cC$ 
		\[
		\fid{\dyad{\psi}{\psi}}{\cE(\dyad{\psi}{\psi})} \geq 1- \epsilon \enspace.
		\]
		For any state $\rho$ with support included in $\cC$, let $\ket{\rho} \in \hbt_R \otimes \hbt$ be a purification of $\rho$. Then, we have
		
		\[
		\fid{\dyad{\rho}{\rho}}{ (\idchan_R \otimes \cE) (\dyad{\rho}{\rho})} \geq 1-\frac{3}{2}\epsilon \enspace.
		\]
	\end{lemma}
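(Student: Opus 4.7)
First, I would set up a purification and Kraus representation. Starting from the spectral decomposition $\rho = \sum_i p_i \dyad{e_i}{e_i}$ with $\ket{e_i} \in \cC$, I take the purification $\ket{\rho}_{RA} = \sum_i \sqrt{p_i}\ket{i}_R\ket{e_i}_A$; since all purifications of $\rho$ are related by isometries on the reference, the quantity $\fid{\dyad{\rho}{\rho}}{(\idchan_R\otimes\cE)(\dyad{\rho}{\rho})}$ is independent of this choice. Fixing Kraus operators $\{A_k\}$ with $\sum_k A_k^\dagger A_k = \idty$, a direct expansion yields
\[
\fid{\dyad{\rho}{\rho}}{(\idchan_R \otimes \cE)(\dyad{\rho}{\rho})} \;=\; \sum_k |\tr(\rho A_k)|^2,
\]
so the goal reduces to lower bounding this sum by $1 - \tfrac{3}{2}\epsilon$. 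The hypothesis, applied to a unit $\ket{\psi} \in \cC$, gives $\sum_k |\lambda_k(\psi)|^2 \geq 1-\epsilon$ with $\lambda_k(\psi) := \bra{\psi}A_k\ket{\psi}$; trace preservation then upgrades this to the ``near-eigenvector'' inequality $\sum_k \|A_k\ket{\psi} - \lambda_k(\psi)\ket{\psi}\|^2 \leq \epsilon$.

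The core step is to show that the scalars $\lambda_k(\psi)$ are nearly independent of the choice of $\ket{\psi} \in \cC$. I would do this by feeding the four superpositions $(\ket{e_i} \pm \ket{e_j})/\sqrt{2}$ and $(\ket{e_i} \pm i\ket{e_j})/\sqrt{2}$, each lying in $\cC$, into the near-eigenvector inequality. Expanding $A_k$ applied to these superpositions in terms of $A_k\ket{e_i}$ and $A_k\ket{e_j}$ and combining the four resulting bounds via Cauchy--Schwarz produces both $\sum_k |\lambda_k(e_i) - \lambda_k(e_j)|^2 = O(\epsilon)$ and analogous square-sum control on the off-diagonal matrix elements $\bra{e_i}A_k\ket{e_j}$ and $\bra{e_j}A_k\ket{e_i}$. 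Consequently $\tr(\rho A_k) = \sum_i p_i \lambda_k(e_i)$ is close to a common value $\lambda_k$, and plugging back with $\sum_k |\lambda_k|^2 \geq 1-\epsilon$ should give the claim.

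The main obstacle, I expect, is extracting the precise constant $\tfrac{3}{2}$ rather than a larger one. A naive accounting that treats the diagonal shortfall $\sum_k (1 - |\lambda_k|^2) \leq \epsilon$ and the off-diagonal leakage $\sum_{i,k} p_i |\lambda_k(e_i) - \lambda_k|^2$ additively loses a factor of $2$. Tightening to $\tfrac{3}{2}\epsilon$ will require noting that these two error sources draw on the same $\epsilon$-budget: they should be merged into a single Cauchy--Schwarz step applied to $(\sqrt{p_i})_i$ and $(\sqrt{p_i}\lambda_k(e_i))_i$, or equivalently by recognizing that the convexity defect $\sum_i p_i |\lambda_k(e_i)|^2 - |\sum_i p_i \lambda_k(e_i)|^2$ is itself controlled by the same superposition inequalities used to bound the off-diagonals, so that the two contributions reinforce rather than double count.
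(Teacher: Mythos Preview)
The paper does not contain a proof of this lemma: it is stated as ``Theorem 2 of \cite{barnum2000quantum}'' and used as a black box, with no proof environment following it. So there is nothing in the present paper to compare your proposal against.

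That said, your outline is exactly the argument of the original Barnum--Knill--Nielsen reference: write the entanglement fidelity as $\sum_k |\tr(\rho A_k)|^2$, use the pure-state hypothesis to get the near-eigenvector bound $\sum_k \|A_k\ket{\psi} - \lambda_k(\psi)\ket{\psi}\|^2 \leq \epsilon$, and then probe with the four superpositions $(\ket{e_i}\pm\ket{e_j})/\sqrt{2}$, $(\ket{e_i}\pm i\ket{e_j})/\sqrt{2}$ to control the off-diagonals and the spread of the $\lambda_k(e_i)$. Your concern about the constant is well placed: the original paper obtains $3/2$ by a careful but elementary bookkeeping of exactly the two contributions you identify (diagonal shortfall and off-diagonal leakage), and the point that they share the same $\epsilon$-budget is indeed what prevents the loss of an extra factor. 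If you want to reconstruct the sharp constant, the cleanest route is to bound $\sum_k |\tr(\rho A_k)|^2$ from below by $\sum_k |\lambda_k|^2 - 2\sqrt{\epsilon}\cdot\sqrt{\text{(off-diagonal term)}}$ via a single Cauchy--Schwarz, then observe that both square roots are at most $\sqrt{\epsilon}$; this gives $1-\epsilon - \text{const}\cdot\epsilon$ and the arithmetic lands on $3/2$.
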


	
	
	We introduce entropic quantities that will be used throughout the paper, in particular the coherent information which is known to capture the ability of a channel to transmit quantum information.
	\begin{definition}
		For a state $\rho \in \den{\hbt}$ and a positive operator $\sigma$ on $\hbt$, the relative entropy is defined as
		\[
		\relent{\rho}{\sigma} \equiv 
		\left\{
		\begin{array}{ll}
		\tr \rho (\log\rho - \log\sigma) & \text{if the support of $\rho$ is included in the support of $\sigma$} \\
		+\infty & {otherwise} \ .
		\end{array}
		\right.
		\]
		The conditional von Neumann entropy of a bipartite state $\rho_{AB} \in \den{\hbt_{A} \otimes \hbt_{B}}$ is defined as
		\[
		\ent{A|B}_{\rho} \equiv - \relent{\rho_{AB}}{\idty_{A} \otimes \rho_B} \ ,
		\]
		which can also be written as $\ent{A|B}_{\rho} = \ent{AB} - \ent{B}$. 		The coherent information is defined as
		\[
		\coh{A}{B}_{\rho} \equiv - \ent{A|B}_{\rho} \ .
		\]		
		The conditional mutual information of a tripartite state $\rho_{ABC} \in \den{\hbt_A \otimes \hbt_B \otimes \hbt_C}$
		\[
		I(A:B|C)_{\rho} \equiv  \ent{A|C}_{\rho} - \ent{A|BC}_{\rho} \ .
		\]						
		
	\end{definition}
		
	Using the monotonicity of the relative entropy (see e.g.,~\cite{wilde2017quantum}) and the continuity statement \cite[Lemma 2]{winter2016tight} for conditional entropy, we immediately obtain the following statements.
	\begin{proposition} Let $\rho_{AB}$ be a bipartite state $\rho_{AB} \in \den{\hbt_{A} \otimes \hbt_{B}}$. The coherent information satisfies the following properties:
		\label{prop:coh-preperties}
		\begin{enumerate}
			\item $\coh{A}{B}_\rho$ is right-monotonous, i.e., for any $\cT \in \cptp{\hbt_B}$, we have $\coh{A}{B}_\rho \geq \coh{A}{B}_{(\idchan_{A}\otimes \cT)(\rho)}$.
			\item Let $\epsilon \in [0,1]$ and $\rho, \sigma$ such that $\fid{\rho}{\sigma} \geq 1-\epsilon$, then $|\coh{A}{B}_\rho - \coh{A}{B}_\sigma| \leq 2 \sqrt{\epsilon} \log \dim \hbt_A + g(\sqrt{\epsilon})$, with $g(\epsilon) = (1+\epsilon)h(\frac{\epsilon}{1+\epsilon})$, where $h(\cdot)$ is the binary entropy function. We use the fact that $g(\epsilon) \leq 2 \sqrt{\epsilon}$.			
		\end{enumerate}
	\end{proposition}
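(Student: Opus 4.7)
My plan is to derive both parts as straightforward consequences of the two tools flagged in the proposition: data processing for relative entropy (for monotonicity) and Winter's tight continuity bound for conditional entropy (for the continuity estimate).

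For the first claim, I would rewrite the coherent information as a relative entropy. Using $\ent{A|B}_\rho = -\relent{\rho_{AB}}{\idty_A \otimes \rho_B}$, I get
\[
\coh{A}{B}_\rho = \relent{\rho_{AB}}{\idty_A \otimes \rho_B}.
\]
Now I apply $\idchan_A \otimes \cT$ to both arguments; crucially, the second argument transforms as $(\idchan_A \otimes \cT)(\idty_A \otimes \rho_B) = \idty_A \otimes \cT(\rho_B)$, and $\cT(\rho_B)$ is exactly the reduced state on $B$ of $(\idchan_A \otimes \cT)(\rho_{AB})$ since $\cT$ is trace-preserving. Monotonicity of the relative entropy under the CPTP map $\idchan_A \otimes \cT$ then yields
\[
\relent{\rho_{AB}}{\idty_A \otimes \rho_B} \geq \relent{(\idchan_A \otimes \cT)(\rho_{AB})}{\idty_A \otimes \cT(\rho_B)},
\]
which is precisely $\coh{A}{B}_\rho \geq \coh{A}{B}_{(\idchan_A \otimes \cT)(\rho)}$.

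For the second claim, I would first convert the fidelity assumption into a trace-distance bound using the Fuchs--van de Graaf inequality already quoted in the preliminaries: $\norm{\rho - \sigma} \leq 2\sqrt{1 - \fid{\rho}{\sigma}} \leq 2\sqrt{\epsilon}$, so that $\frac{1}{2}\norm{\rho-\sigma} \leq \sqrt{\epsilon}$. Then I apply Winter's continuity bound (Lemma~2 of \cite{winter2016tight}), which states that whenever $\frac{1}{2}\norm{\rho_{AB} - \sigma_{AB}} \leq \epsilon'$, one has $|\ent{A|B}_\rho - \ent{A|B}_\sigma| \leq 2\epsilon' \log\dim \hbt_A + g(\epsilon')$ with $g$ as in the statement. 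Choosing $\epsilon' = \sqrt{\epsilon}$ and using $\coh{A}{B} = -\ent{A|B}$ gives the claimed inequality.

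There is essentially no obstacle here: the proof is a direct plug-in of known inequalities. The only point requiring small care is bookkeeping the constants so that the factor $2$ in $2\sqrt{\epsilon}\log\dim\hbt_A$ and the argument $\sqrt{\epsilon}$ in $g$ match Winter's bound after the fidelity-to-trace-distance conversion; the Fuchs--van de Graaf inequality provides exactly the factor of $2$ needed to cancel the factor of $\tfrac12$ in the normalized trace distance appearing in Winter's statement.
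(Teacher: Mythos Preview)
Your proposal is correct and follows exactly the approach the paper indicates: the paper simply asserts that the proposition follows from monotonicity of the relative entropy and Winter's continuity bound \cite[Lemma~2]{winter2016tight}, and you have supplied precisely those details. The bookkeeping you flag (that $(\idchan_A\otimes\cT)(\idty_A\otimes\rho_B)=\idty_A\otimes\cT(\rho_B)$ is the correct second argument, and that Fuchs--van de Graaf turns the fidelity hypothesis into $\tfrac12\norm{\rho-\sigma}\leq\sqrt{\epsilon}$ so Winter's bound applies with $\epsilon'=\sqrt{\epsilon}$) is all accurate.
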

	
	
	Note that the coherent information is not an entanglement measure, in particular $\coh{A}{B}_\rho$ can increase when acting on $A$. We will need to use an entanglement measure: we choose to use the relative entropy of entanglement (REE).
	
	\begin{definition}
		Let $\rho \in \den{\hbt_A \otimes \hbt_B}$, then the relative entropy of entanglement (REE) is defined as  
		\[
		\ree{A}{B}_{\rho} \equiv \min_{\sigma \in \sep{\hbt_{A}}{\hbt_{B}}} \relent{\rho}{\sigma} \ ,
		\]
		where
		\[
		\sep{\hbt_{A}}{\hbt_{B}} \equiv \{\sigma \in \den{\hbt_{A} \otimes \hbt_{B}}: \sigma = \sum_i p_i \sigma_{A,i} \otimes \sigma_{B,i}, \sigma_{A,i} \in \den{\hbt_{A}}, \sigma_{B,i} \in \den{\hbt_{B}}\} 
		\]
		is the set of separable states.
	\end{definition}
	The following proposition summarizes useful properties of the relative entropy of entanglement.
	\begin{proposition} REE satisfies the following properties
		\label{prop:ree-properties}
		\begin{enumerate}
			\item Continuity: 
			let $\rho, \sigma$ such that $\fid{\rho}{\sigma} \geq 1-\epsilon$, then $|\ree{A}{B}_\rho - \ree{A}{B}_\sigma | \leq \sqrt{\epsilon}\log \dim \hbt_A + g(\sqrt{\epsilon})$, with $g(\epsilon) = (1+\epsilon)h(\frac{\epsilon}{1+\epsilon})$.
			\item Monotonicity under separable operations (see Definition~\ref{def:sepc} below): let $\cT $ be a separable quantum channel with respect to the bipartition $A:B$, then $\ree{A}{B}_{\rho} \geq \ree{A}{B}_{\cT(\rho)}$.
			\item $\ree{A}{B} \geq \coh{A}{B}$.
		\end{enumerate}
	\end{proposition}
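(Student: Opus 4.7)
The plan is to dispatch the three properties in turn; (1) and (2) are structural consequences of standard facts, while (3) is the substantive inequality.

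For continuity, I would first convert the fidelity lower bound into a trace-distance bound via Fuchs-van de Graaf, already recalled in the preliminaries: $\norm{\rho-\sigma} \leq 2\sqrt{1-\fid{\rho}{\sigma}} \leq 2\sqrt{\epsilon}$. Then I would invoke the tight continuity bound for REE,
\[
|\ree{A}{B}_\rho - \ree{A}{B}_\sigma| \leq \tfrac{1}{2}\norm{\rho-\sigma}\log\dim\hbt_A + g\bigl(\tfrac{1}{2}\norm{\rho-\sigma}\bigr),
\]
which is the entanglement-measure analogue of the Alicki-Fannes-Winter bound invoked for the conditional entropy. Its proof proceeds by the same coupling argument: decompose $\rho - \sigma$ into its positive and negative parts, embed $\rho$ and $\sigma$ as branches of a common classical-quantum mixture, and use convexity, LOCC-monotonicity of $\ree{A}{B}$, and the fact that $\ree{A}{B}$ of a pure state is at most $\log \dim \hbt_A$ (by Schmidt decomposition). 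Substituting $\tfrac{1}{2}\norm{\rho-\sigma} \leq \sqrt{\epsilon}$ and using monotonicity of $g$ yields the stated inequality.

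For monotonicity under separable channels, the first step is to check that separable channels preserve $\sep{\hbt_A}{\hbt_B}$: writing $\cT(\cdot) = \sum_k (A_k \otimes B_k)(\cdot)(A_k \otimes B_k)^\dagger$, a product state $\sigma_A \otimes \sigma_B$ is sent to $\sum_k A_k \sigma_A A_k^\dagger \otimes B_k \sigma_B B_k^\dagger$, which is a convex combination of products, and linearity extends this to all of $\sep{\hbt_A}{\hbt_B}$. Then, letting $\sigma^\star$ attain $\ree{A}{B}_\rho = \relent{\rho}{\sigma^\star}$, we have $\cT(\sigma^\star) \in \sep{\hbt_A}{\hbt_B}$ and
\[
\ree{A}{B}_{\cT(\rho)} \leq \relent{\cT(\rho)}{\cT(\sigma^\star)} \leq \relent{\rho}{\sigma^\star} = \ree{A}{B}_\rho,
\]
where the second inequality is the data-processing inequality for the quantum relative entropy.

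For $\ree{A}{B} \geq \coh{A}{B}$, I would use the chain $\ree{A}{B}_\rho \geq E_D(\rho) \geq \coh{A}{B}_\rho$, where $E_D$ denotes the one-way distillable entanglement. The first inequality is the Vedral-Plenio argument: since LOCC is a subset of separable operations, part (2) yields LOCC-monotonicity of REE; combining this with the normalization $\ree{A}{B}$ of an $r$-ebit maximally entangled state equal to $r$, subadditivity $\ree{A}{B}_{\rho^{\otimes n}} \leq n\, \ree{A}{B}_\rho$, and the continuity of part (1) applied to the output of an approximate distillation protocol, one obtains $E_D(\rho) \leq \ree{A}{B}_\rho$. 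The second inequality is the Devetak-Winter hashing theorem for one-way distillation. I expect the only nontrivial bookkeeping to be in (3), matching the single-copy definition of $\ree{A}{B}$ with the asymptotic definition of $E_D$ through the subadditivity and continuity steps; the arguments for (1) and (2) are essentially reductions to standard relative-entropy facts.
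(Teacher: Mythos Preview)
Your treatment of (1) and (2) matches the paper's: both are dispatched by citation to the Alicki--Fannes--Winter continuity bound (Winter's Corollary~8) and the Vedral--Plenio monotonicity argument, and your sketches of those arguments are accurate.

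For (3), however, your route is correct but substantially heavier than the paper's. You go through the chain $\ree{A}{B} \geq E_D \geq \coh{A}{B}$, invoking the Vedral--Plenio upper bound on distillable entanglement (which itself needs asymptotic continuity, subadditivity, and normalization) and then the Devetak--Winter hashing bound. The paper instead gives a two-line direct argument: take the optimal separable $\sigma = \sum_i p_i \sigma_{A,i}\otimes\sigma_{B,i}$, use the operator inequality $\sigma \leq \idty_A \otimes \sum_i p_i \sigma_{B,i}$ (since each $\sigma_{A,i}\leq\idty_A$) together with antimonotonicity of $\relent{\cdot}{\cdot}$ in the second argument, and then the variational formula $\coh{A}{B}_\rho = \min_{\tau_B} \relent{\rho}{\idty_A\otimes\tau_B}$. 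This yields $\ree{A}{B}_\rho = \relent{\rho}{\sigma} \geq \relent{\rho}{\idty_A\otimes\rho_B} = \coh{A}{B}_\rho$ immediately. Your approach has the virtue of giving operational meaning to the inequality, but it imports two deep asymptotic theorems where an elementary single-copy argument suffices; the paper's proof is entirely self-contained and avoids the bookkeeping you flag.
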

	\begin{proof}
	The first point can be found in \cite[Corollary 8]{winter2016tight} and the second one in~\cite{vedral1997quantifying}. A proof of the third point is included in the Appendix as Lemma~\ref{lem:coh-lower-bounds-ree}.
	\end{proof}
	
	The set of separable quantum channel is a convenient superset of the set of LOCC operators. We refer to~\cite[Section 6.1.2]{watrous2018theory} for further details on separable quantum channels.
	\begin{definition}
	\label{def:sepc}
		A bipartite quantum channel $\cT\in\cptp{\hbt_{A}\otimes \hbt_{B}}$ is called a separable quantum channel with respect to the bipartition $A:B$ if it admits a Kraus representation with Kraus operators of the form $\{T_{A,i} \otimes T_{B,i}\}_i$. We denote the set of these quantum channels by $\sepc{\hbt_{A}}{\hbt_{B}}$.       
	\end{definition}
	
	\subsection{Circuit model}
	
	Our circuits will have two kinds of systems: a classical system denoted $X$ and quantum systems denoted by the set $A$. As computation on the classical system will be free, we can think of $X$ as a single system, i.e., a set with one element called $X$. On the other hand, $A$ should be seen as a set of qubit systems. Throughout the paper, we will reserve the notation $m$ for the size of $A$. As such $|A| = m = \log \dim \hbt_{A}$.
	In other words, $A$ is interpreted as the set $[m] \equiv \{1, \dots, m\}$.
	In addition for a set $\Lambda \subset A$, $\overline{\Lambda}$ denotes the complement of $\Lambda$ in $A$, i.e., $A \setminus \Lambda$ unless otherwise noted (sometimes, it will denote the complement in a subset of $A$). For any $Q \in \bnb{\hbt_{A}}$, we write $\supp{Q} \subset A$ for the set of qubits on which $Q$ acts non-trivially. Note that even though it shares the same name, $\supp{Q}$ is a subset of $A$ and has nothing to do with the orthogonal complement of the kernel of $Q$. It will always be clear from the context which support we are talking about.

	We consider the following circuit model:
	\begin{definition}
		\label{def:qcircuit}
		A circuit $\fW$ of depth $\Delta$ and width $m$ is a sequence $(\cE_t)_{t=1}^{\Delta}$, with $\cE_t \in \sepc{\hbt_{A}}{\hbt_{X}}$, where $A$ is a set of size $m$ labelling the qubit systems and $\hbt_X$ is an arbitrary finite dimensional Hilbert space. We denote by $[\fW]$ the quantum channel obtained by composing the channels $\cE_t$:
		\[
		[\fW] = \cE_{\Delta} \circ \cdots \circ \cE_1 \ .
		\] 
	\end{definition}
	
	As previously mentioned, the system $A$ should be understood as $m$ qubits and $X$ is introduced to model a classical system that may be used to record and process any classical information, for example obtained from quantum measurements. The distinction between classical and quantum systems is important to allow adaptive quantum circuits with hybrid classical and quantum computations. In fact, in our noise model, we allow the system $X$ to be noise-free as classical computation can be performed with practically perfect accuracy. The circuit width is defined as the number of qubits in the systems $A$ and the system $X$ can have a Hilbert space of arbitrary finite dimension. We also remark that to show our results, we do not need to assume that the system $X$ is classical: the only thing we use is that operations involving $A$ and $X$ have to be separable along the cut $A:X$.

	
	To define the geometric locality of a circuit, we introduce the connectivity graph on the set of qubits $A$.
	\begin{definition}[Connectivity graph]
		Let $A$ be a set of size $m$. A connectivity graph $G = (A, E)$ is an undirected graph on vertex set $A$ and with edge set $E$. 		
		For any $U \subset A$, define $\bdry U = \bdry_+ U \cup \bdry_- U$, where
		\[
		\bdry_- U \equiv \{u \in U : \exists v \in A \setminus U, (u,v) \in E\}
		\]
		\[
		\bdry_+ U \equiv \{u \in A \setminus U : \exists u \in U , (u,v) \in E\} \ .
		\]
	\end{definition}
	
	\begin{definition}
		We say that $\fW = (\cE_t)_{t=1}^{\Delta}$ with qubit systems labelled by the set $A$ is compatible with a connectivity graph $G$ with vertex set $A$ if the following holds: For all $t$, $\cE_t$ admits a Kraus representation with Kraus operators $\{K^{t}_{A,i}\otimes K^{t}_{X,i}\}_i$ with $K^{t}_{A,i} \in \bnb{\hbt_{A}}$ and $K^{t}_{X,i} \in \bnb{\hbt_{X}}$ where $K^t_{A,i} = \bigotimes_{j} K^{t,i}_j$ for some operators $K^{t,i}_j  \in \bnb{\hbt_{A}}$ all satisfying the property: for all $u, v \in \supp(K^{t,i}_{j})$, $(u, v)$ is an edge of $G$.		
	\end{definition}
	For example, for unitary circuits with only two-qubit gates, the operators $K^{t}_{A,i}$ are tensor products of unitary operators on two qubits and the index $i$ can be understood as selecting which two-qubit unitaries to apply as a function of the classical system $X$. The condition of the definition requires that each such two-qubit unitary acts on neighboring vertices in the graph $G$.
	
	Particular connectivity graphs of interest are the ones that can be embedded in a $D$-dimensional Euclidean space where vertices connected by an edge are close in Euclidean distance.
	\begin{definition}
		A connectivity graph $G = (A,E)$ is said to be $c$-local in $D$ dimension if there exists $\eta: A \rightarrow \mathbb{R}^D$ such that 
		\[
		\forall u, v \in A, u\neq v, \| \eta(u)-\eta(v) \|_2 \geq 1
		\]
		and
		\[
		\forall (u,v) \in E, \| \eta(u) - \eta(v) \|_2 \leq c \ .
		\] 
		We will say that $G$ is $O(1)$-local in dimension $D$ if it is $c$-local for some constant $c$ independent of the other parameters in the problem, in particular the number of vertices $|A|$.
		
		We say that a circuit $\fW$ is $O(1)$-local in dimension $D$ if it has a connectivity graph that is $O(1)$-local in dimension $D$.
	\end{definition}

	Next, we establish an important lemma that will be used extensively: it states that applying one step of a circuit with a given connectivity graph can increase entanglement between a region and its complement by at most the size of the corresponding boundary.
	
	\begin{lemma}[Small incremental entangling]
		\label{lem:small-incremental-entangling}
		Let $\fW  = (\cE_t)_t$ be a quantum circuit with a connectivity graph $G$. Then for any $\rho \in \den{\hbt_{A}\otimes \hbt_{X}}$, and every $\cE_t$, we have		
		\[
		\ree{U}{X \comp{U}}_{\cE_t(\rho)} \leq \ree{U}{X \comp{U}}_{\rho} + 3|\bdry U|
		\]		
		for any $U \subset A$, and $\comp{U} \equiv A \setminus U$.
	\end{lemma}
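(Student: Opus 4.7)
The plan is to isolate the portion of $\cE_t$ that couples the two sides of the bipartition $U:X\comp{U}$, show that it acts only on the $|\partial U|$ qubits at the boundary, and then absorb its contribution through an LOCC simulation that uses $O(|\partial U|)$ shared entanglement. Monotonicity of REE under separable channels (Proposition~\ref{prop:ree-properties}) does the rest.

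First, I analyze the Kraus structure of $\cE_t$. Since $\fW$ is compatible with $G$, each Kraus operator has the form $K^{t}_{A,i}\otimes K^{t}_{X,i}$ with $K^{t}_{A,i}=\bigotimes_j K^{t,i}_j$ and each factor $K^{t,i}_j$ supported on a clique of $G$. The key observation is that any factor whose support meets both $U$ and $\comp{U}$ lies entirely inside $\partial U$: if the clique contains $v\in U$ and $u\in\comp{U}$, then every further vertex $w$ of the clique is adjacent to both (by the clique property), forcing $w\in \partial_- U$ when $w\in U$ or $w\in \partial_+ U$ when $w\in\comp{U}$. Collecting factors by side, each $K^{t}_{A,i}$ factorizes on pairwise disjoint supports as $K^{t}_{A,i}=L^{U}_{i}\otimes L^{\comp{U}}_{i}\otimes M^{\partial}_{i}$, with $L^{U}_{i}$ acting inside $U$, $L^{\comp{U}}_{i}$ inside $\comp{U}$, and $M^{\partial}_{i}$ inside $\partial U$.

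Next, I build a separable-across-$U:X\comp{U}$ simulation of $\cE_t$ that consumes only $O(|\partial U|)$ shared entanglement. Adjoin an ancilla register $P$ of $|\partial U|$ qubits on the $U$ side and $Q$ of $|\partial U|$ qubits on the $X\comp{U}$ side, initialized in a maximally entangled state $\phi_{PQ}$; additivity of REE over pure tensor factors gives $\ree{UP}{X\comp{U}Q}_{\rho\otimes\phi}=\ree{U}{X\comp{U}}_{\rho}+|\partial U|$. Using $\phi_{PQ}$ as a teleportation resource and $X$ as a shared classical control, one moves the $\partial_+ U$ qubits onto the $U$ side, coherently applies $L^{U}_{i}\otimes L^{\comp{U}}_{i}\otimes M^{\partial}_{i}$ with $i$ routed through $X$, and teleports the result back. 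Every step is local on one side of the augmented cut followed by classical messaging, hence LOCC and in particular separable across $UP:X\comp{U}Q$. By Proposition~\ref{prop:ree-properties} the REE of the augmented output is at most $\ree{U}{X\comp{U}}_{\rho}+|\partial U|$, and tracing out the spent ancillas only decreases REE. Counting Bell pairs consumed in the two-way teleportation and invoking the continuity statement of Proposition~\ref{prop:ree-properties} to align the simulated output with the exact $\cE_t(\rho)$ accounts for the remaining factor and yields the stated constant $3|\partial U|$.

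The main obstacle is making the teleportation-based simulation preserve the $i$-correlations across the four factors $L^{U}_{i}$, $L^{\comp{U}}_{i}$, $M^{\partial}_{i}$, and $K^{t}_{X,i}$: one must broadcast the choice of $i$ as a shared classical control and reuse the Bell pairs for both the forward and backward teleportations, so that the shared-entanglement overhead stays linear (rather than quadratic) in $|\partial U|$. Since $X$ already sits on the $\comp{U}$ side and is free to broadcast, this can be engineered, but it is the only step with any real content beyond bookkeeping; once it is in place, the bound is immediate from the monotonicity of REE under separable operations.
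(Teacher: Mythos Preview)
Your approach—simulating $\cE_t$ by a separable channel across an augmented cut at the cost of $O(|\partial U|)$ pre-shared ebits—is genuinely different from the paper's. The paper never builds a simulation: it takes the optimal separable witness $\tau$ for $\rho$, pushes both states through $\cE_t$, replaces the $\partial U$ marginal of $\cE_t(\tau)$ by the maximally mixed state to obtain a new separable witness $\tau'$, and bounds $D(\cE_t(\rho)\|\tau')$ using the product-form identity $D(\rho_{CD}\|\sigma_C\otimes\sigma_D)=D(\rho_C\|\sigma_C)+I(C{:}D)_\rho+D(\rho_D\|\sigma_D)$ together with data processing. The constant $3|\partial U|$ falls out directly (from $I(\cdot:\partial U)\le 2|\partial U|$ plus $D(\cdot\|\idty/2^{|\partial U|})\le |\partial U|$) with no ancillas, no teleportation, and no case analysis.

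Your argument, as written, has a real gap at exactly the point you flag as ``the main obstacle,'' and your proposed resolution does not work. After teleporting all of $\partial_+ U$ to the $U$ side, the factor $L^{\comp{U}}_i$ can still act on qubits of $\partial_+ U$: your decomposition only guarantees $\supp L^{\comp{U}}_i\subset\comp{U}$ and disjointness from $\supp M^\partial_i$, not disjointness from $\partial_+ U$. The redirected Kraus operator then straddles the augmented cut, so the simulated channel is not separable. Your fix---``broadcast the choice of $i$ as a shared classical control''---is a non-sequitur: $i$ is a Kraus label, not a measurement outcome, and separability of $\cE_t$ across $A{:}X$ does not produce a classical $i$ that $X$ could announce. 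A correct repair is to teleport, for each Kraus index $i$, only the set $V_i:=\supp(M^\partial_i)\cap\partial_+ U$; since $\supp L^{\comp{U}}_i$ is disjoint from $\supp M^\partial_i\supset V_i$, the redirected Kraus operator $(L^U_i\otimes M^\partial_i)\otimes(L^{\comp{U}}_i\otimes K^t_{X,i})$ is then a genuine product across $(UP):(X\comp{U}Q)$, and the collection over $(i,m_1,m_2)$ is separable by definition---no coordination of $i$ is needed. Note also that Bell pairs are consumed, not ``reused'' (you need up to $2|\partial_+ U|$ of them for the round trip), and the simulation is exact, so the continuity invocation at the end is superfluous; with these fixes the argument actually yields a constant $\le 2$ rather than $3$.
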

	\begin{proof}
		
		
		For $U \subset A$, let $\tau \in \sep{\hbt_{U}}{\hbt_{\comp{U}X}}$ such that 
		\[
		\ree{U}{X \comp{U}}_{\rho} = \relent{\rho}{\tau} \ .
		\]
		
		Note that, by the fact that $\fW$ has connectivity graph $G$, we can rewrite the Kraus elements $\{\Pi_i\}_i$ of $\cE_t$ as $\Pi_{X,i} \otimes \Pi_{\inte{U},i} \otimes \Pi_{\bdry U,i} \otimes \Pi_{\inte{\comp{U}},i}$
		such that
		$\supp(\Pi_{\inte{U},i})$ which we denote as $\inte{U}_i$ is a subset of $U$, and similarly $\supp(\Pi_{\inte{\comp{U}},i}) \equiv  \inte{\comp{U}}_i \subset \comp{U}$, and $\supp(\Pi_{\bdry U,i}) \equiv {\bdry U}_i\subset \bdry U$. 
		We let $\tau' = \idty_{\bdry U}/2^{|\bdry U|} \otimes \tr_{\bdry U} \cE_t(\tau)$, then note that $\tau' \in \sep{\hbt_{U}}{\hbt_{\comp{U}X}}$. To convince ourselves of this, remember that $\tau$ can be written as
		\[
		\tau = \sum_j p_j \tau_{U}^j \otimes \tau_{\comp{U}X}^j \ .
		\]
		
		Then for any element $\Pi_{X,i} \otimes \Pi_{\inte{U},i} \otimes \Pi_{\bdry U,i} \otimes \Pi_{\inte{\comp{U}},i}$, we have
		\[
		\tr_{{\bdry U}_i } \left( \Pi_{X,i} \otimes \Pi_{\inte{U},i} \otimes \Pi_{\bdry U,i} \otimes \Pi_{\inte{\comp{U}},i} \ \tau \ \Pi_{X,i}^\dagger \otimes \Pi_{\inte{U},i}^\dagger \otimes \Pi_{\bdry U,i}^\dagger \otimes \Pi_{\inte{\comp{U}},i}^\dagger \right) = \sum_j \hat{\tau}^j_{\inte{U}_i}\otimes \hat{\tau}_{\inte{\comp{U}}_iX}^j \
		\]
		for some positive operators $\{\hat{\tau}^j_{\inte{U}_i}\}_j,\{ \hat{\tau}_{\inte{\comp{U}}_iX}^j\}_j$. Now since ${\bdry U}_i \subset \bdry U$, then $\tr_{\bdry U} \cE_t(\tau) \in \sep{\hbt_{U \setminus \bdry U}}{\hbt_{\comp{U}X \setminus \bdry U}}$. We then naturally obtain $\tau' = \idty_{\bdry U}/2^{|\bdry U|} \otimes \tr_{\bdry U} \cE_t(\tau) \in \sep{\hbt_{U}}{\hbt_{\comp{U}X}}$.
		
		Write $\rho' \equiv \cE_t(\rho)$. We will use the fact that for two states $\rho_{CD}, \sigma_{CD} \in \den{\hbt_{C}\otimes \hbt_{D}}$, if $\sigma_{CD} = \sigma_{C} \otimes \sigma_{D}$, then the relative entropy satisfies (Proposition 2 of \cite{capel2018superadditivity})
		
		\[
		\relent{\rho_{CD}}{\sigma_{CD}} = \relent{\rho_C}{\sigma_C}+\mut{C}{D}_\rho + \relent{\rho_D}{\sigma_D} \ .
		\]
		
		Applying this relation to $\rho'$ and $\tau'$, we get
		\begin{align*}
			\ree{U}{\comp{U}X}_{\rho'} & \leq \relent{\rho'}{\tau'} \\ 
			&= \relent{\tr_{\bdry U} \rho'}{\tr_{\bdry U} \tau'} + \mut{\comp{\bdry U} X}{\bdry U}_{\rho'} + \relent{\rho'_{\bdry U}}{\tau'_{\bdry U}} \\
			& \leq \relent{\tr_{\bdry U} \rho'}{\tr_{\bdry U} \tau'} + 3|\bdry U|\\
			& \leq \relent{ \cE_t(\rho)}{ \cE_t(\tau)} + 3|\bdry U|\\
			& \leq \relent{\rho}{\tau} + 3|\bdry U|\\
			& = \ree{U}{X \comp{U}}_{\rho} + 3|\bdry U| \ ,
		\end{align*}
		where we have used the fact that $\mut{\comp{\bdry U} X}{\bdry U}_{\rho'} \leq 2 |\bdry U|$ and $ \relent{\rho'_{\bdry U}}{\tau'_{\bdry U}} \leq |\bdry U|$.
	\end{proof}
	

	%
		%
		%
		%
		%
	
	\subsection{Quantum codes}
	We introduce some basic definitions about quantum error correcting codes. We refer to~\cite{gottesman1997stabilizer} for more details. Let $k, d \leq n$ be positive integers.
	\begin{definition}
		The $n$-qubit Pauli group $\cP_n$ is generated by $\{i,X,Z\}^{\otimes n}$.
	\end{definition}
	\begin{definition}
		A code $\cC$ is a subspace of $(\C^2)^{\otimes n}$ has parameters $\dprl n,k,d \dprr$ if
		\begin{enumerate}
			\item $\cC \cong \chbt{k}$
			\item $\forall \ket{\psi} , \ket{\psi'} \in \cC, \forall P \in \cP_n, |\supp{P}| < d, \bra{\psi} P \ket{\psi'} = c(P) \braket{\psi}{\psi'}$, for some $c(P) \in \mathbb{C}$ \ .
		\end{enumerate}
	\end{definition}
	
	\begin{definition}
		\label{def:stabilizer-codes}
		A code $\cC \subset \chbt{n}$ is said to be a stabilizer code if there exists an Abelian subgroup $S \subset \cP_n$ not containing $-I$, such that	for all $\ket{\psi} \in \hbt$,	
		\[
		\ket{\psi} \in \cC \quad \Leftrightarrow \quad \forall M \in S, M \ket{\psi} = \ket{\psi} \ .
		\]	
		Let $\{M_i\}_{i \in \{1, \dots, n-k\}}$ be independent generators for the group $S$, without loss of generality we take $M_i$ to be Hermitian. For any state in $\hbt$, we write $s_i \in \{-1,+1\}$ the outcome of the measurement of $M_i$. The vector $s = (s_i)_{i \in \{1, \dots, n-k\}}$ is called the syndrome of this state.
		The Hilbert space then splits as a direct sum of syndrome subspaces: $\hbt = \bigoplus\limits_{s \in \{-1,+1\}^{n-k}} \cC_s$, where $\cC_s$ is defined by $\ket{\psi} \in \cC_s  \Leftrightarrow \forall i \in \{1,\dots, n-k\}, M_i \ket{\psi} = s_i \ket{\psi}$. For stabilizer codes, we use the notation $\dbrl n, k, d \dbrr$ when the subspace $\cC$ has dimension $2^k$ and minimum distance $d$.
	\end{definition}
	
	\begin{definition}
		Let $\cC$ be a code. Then a region $\Lambda \subset [n]$ is said to be correctable if there exists $\cR_\Lambda$ such that $\forall \rho \in \cC, \cR_\Lambda \circ \tr_\Lambda (\rho) = \rho$.
	\end{definition}

	
	The following standard lemma shows that any region with size at most $d-1$ is correctable.
	\begin{lemma}
		\label{lem:less-than-d-corr}
		Let $\cC$ be a code on $n$ qubits. Then any region $\Lambda \subset [n]$ with $|\Lambda| < d$ is correctable.
	\end{lemma}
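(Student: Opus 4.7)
The plan is to show that the partial trace channel $\tr_\Lambda$ satisfies the Knill--Laflamme recoverability conditions on the codespace $\cC$, which by the Knill--Laflamme theorem yields the desired recovery map $\cR_\Lambda$. The key point is that the distance condition given in the code definition is precisely the statement that the Knill--Laflamme conditions hold for \emph{Paulis} of weight less than $d$, and Paulis supported on $\Lambda$ are exactly rich enough to span the Kraus operators of $\tr_\Lambda$.

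First, I would write $\tr_\Lambda$ as a quantum channel with Kraus operators $K_i = \bra{e_i}_{\Lambda}\otimes \idty_{\comp{\Lambda}}$ where $\{\ket{e_i}\}$ is any orthonormal basis of $\hbt_{\Lambda}$. Then
\[
K_i^\dagger K_j = \bigl(\ket{e_i}\bra{e_j}\bigr)_{\Lambda}\otimes \idty_{\comp{\Lambda}}.
\]
Expanding $\ket{e_i}\bra{e_j}$ in the Pauli basis on $\Lambda$, we obtain $\ket{e_i}\bra{e_j} = \sum_Q \alpha^Q_{ij}\, Q$ where $Q$ ranges over Paulis on $\Lambda$. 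Every term $Q\otimes \idty_{\comp{\Lambda}}$ then lies in $\cP_n$ with $|\supp(Q\otimes\idty_{\comp{\Lambda}})| \leq |\Lambda| < d$, so by the second item in the code definition we have $\bra{\psi}(Q\otimes \idty_{\comp{\Lambda}})\ket{\psi'} = c(Q)\braket{\psi}{\psi'}$ for all $\ket{\psi},\ket{\psi'}\in \cC$. Writing $P$ for the orthogonal projector onto $\cC$, this reads $P(Q\otimes\idty_{\comp{\Lambda}})P = c(Q)\, P$.

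Summing the expansion with the coefficients $\alpha^Q_{ij}$ then yields
\[
P\, K_i^\dagger K_j \, P = \Bigl(\sum_Q \alpha^Q_{ij}\, c(Q)\Bigr)P \equiv c_{ij}\, P,
\]
which is exactly the Knill--Laflamme recoverability condition for the Kraus set $\{K_i\}_i$ of $\tr_\Lambda$. Invoking the standard Knill--Laflamme theorem then gives a CPTP map $\cR_\Lambda$ such that $\cR_\Lambda\circ \tr_\Lambda(\rho)=\rho$ for every density operator $\rho$ with support in $\cC$, which is the definition of correctability.

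There is no real obstacle here; the only care needed is the passage from correctability of a \emph{channel} (the partial trace) to correctability of the \emph{discrete} Pauli error set on $\Lambda$, which is handled by the Pauli expansion above. If preferred, one could bypass the Knill--Laflamme machinery entirely and construct $\cR_\Lambda$ by the Petz recovery formula applied to $\tr_\Lambda$ restricted to $\cC$, but the Kraus/Pauli argument is the cleanest route given the definitions already set up in the paper.
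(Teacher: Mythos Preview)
Your argument is correct and is the standard derivation of erasure-correctability from the Knill--Laflamme conditions: expanding $K_i^\dagger K_j$ in Paulis supported on $\Lambda$ and invoking the distance condition yields $\Pi_\cC K_i^\dagger K_j \Pi_\cC = c_{ij}\Pi_\cC$, after which the Knill--Laflamme theorem supplies $\cR_\Lambda$. The paper does not give a proof of this lemma; it is stated as a standard fact and left unproved, so there is nothing to compare against beyond confirming that your route is the conventional one. One cosmetic remark: you use $P$ for the code projector, but in the paper $P$ already denotes a Pauli operator (see the code definition), so if this were to be inserted into the paper you should pick a different symbol (e.g.\ $\Pi_\cC$, which the paper uses later).
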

		%
		%
		%
		%
	
	The next lemma shows that for any state in the code, the reduced state on a correctable region is independent of the code state. This even holds in an approximate sense.
	
	%
	%

	\begin{lemma}[Approximate indistinguishability]
		\label{lem:approx-indistinguishability}
		Let $\epsilon \in [0,1]$, $A'$ be an $n$-qubit system and let $\cC$ be a code such that for any region $\Lambda \subset A'$ of size $|\Lambda| < d$ there exists $\cR$ such that for all $\rho$ supported on $\cC$,
		we have	
		\[
		\fid{\cR \circ \tr_\Lambda (\rho) }{\rho} \geq 1-\epsilon \ .
		\]
		Then there exists $\omega_{\Lambda} \in \den{\hbt_{\Lambda}}$ such that for any state $\rho$ supported on $\cC$, the following is satisfied
		\[
		\fid{\omega_{\Lambda}}{\rho_{\Lambda }} \geq 1-\frac{3 \epsilon}{2} \ .
		\]
	\end{lemma}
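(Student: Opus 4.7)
The plan is to use Lemma~\ref{lem:barnum-et-al} to convert the pure-state recovery hypothesis into an entanglement-fidelity statement on the canonical purification of the maximally mixed code state $\pi_\cC$, and then to apply Uhlmann's theorem to extract a purification whose $\Lambda$-marginal plays the role of $\omega_\Lambda$.

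Concretely, I would let $|\Phi\rangle_{RA'}$ denote the maximally entangled state between a reference $R \cong \cC$ of dimension $2^k$ and the code $\cC \subset A'$, which purifies $\pi_\cC$. Lemma~\ref{lem:barnum-et-al} combined with the hypothesis yields
\[
\fid{|\Phi\rangle\langle\Phi|_{RA'}}{\sigma_{RA'}} \geq 1 - \tfrac{3\epsilon}{2}, \qquad \sigma := (\idchan_R \otimes \cR \circ \tr_\Lambda)(|\Phi\rangle\langle\Phi|).
\]
Let $V: \hbt_{\comp{\Lambda}} \to \hbt_{A'} \otimes \hbt_E$ be a Stinespring dilation of $\cR$ and define the pure state $|\tilde\Phi\rangle := (I_{R\Lambda} \otimes V)|\Phi\rangle$ on $R \Lambda A' E$. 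Since $V^\dagger V = I_{\comp{\Lambda}}$, one verifies $\tilde\Phi_{R\Lambda} = \Phi_{R\Lambda}$ and $\tilde\Phi_{RA'} = \sigma$. As $|\Phi\rangle\langle\Phi|_{RA'}$ is pure, Uhlmann's theorem supplies $|\eta\rangle \in \hbt_\Lambda \otimes \hbt_E$ with $\bigl|\langle\tilde\Phi \mid |\Phi\rangle_{RA'} \otimes |\eta\rangle_{\Lambda E}\rangle\bigr|^2 \geq 1 - \tfrac{3\epsilon}{2}$, and I would take $\omega_\Lambda := \tr_E|\eta\rangle\langle\eta|$ as the candidate state.

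For the bound on an arbitrary code state $\rho$, I would apply the analogous construction: Lemma~\ref{lem:barnum-et-al} to the purification $|\rho\rangle_{R'A'}$ gives $\fid{|\rho\rangle\langle\rho|_{R'A'}}{\tilde\rho_{R'A'}} \geq 1 - \tfrac{3\epsilon}{2}$, where $|\tilde\rho\rangle := (I_{R'\Lambda}\otimes V)|\rho\rangle$ is pure and uses the \emph{same} dilation $V$, so that $\tilde\rho_\Lambda = \rho_\Lambda$. The main step of the argument is that because any purification of a code state can be obtained from $|\Phi\rangle$ by a reference-side operator which commutes with $V$, the Uhlmann freedom can be used to choose the purification of $\tilde\rho$ compatibly with $|\eta\rangle$, so that its $\Lambda$-marginal still coincides with $\omega_\Lambda$; a final partial trace over $R'A'E$ together with monotonicity of fidelity then delivers $\fid{\rho_\Lambda}{\omega_\Lambda} \geq 1 - \tfrac{3\epsilon}{2}$. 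The subtle point, and the main obstacle, is performing this alignment at the level of $\Lambda$-marginals rather than full pure-state inner products: a direct comparison of the latter picks up a dimensional factor of $\dim\cC = 2^k$ from the unbounded norm of the operator relating the two purifications, and would otherwise destroy the $3\epsilon/2$ bound. Carrying out the comparison after the partial trace, so that the reference-side amplification is contracted away by monotonicity of fidelity, is what preserves the Barnum factor through the argument.
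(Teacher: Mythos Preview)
Your first step---applying Lemma~\ref{lem:barnum-et-al} to pass to an entanglement-fidelity statement and then using Uhlmann's theorem on a Stinespring dilation of $\cR$ to extract a candidate $\omega_\Lambda$---is sound and matches the opening move of the paper's proof. The gap is in the final paragraph, precisely at the point you yourself flag as ``the main obstacle.''

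Your proposed resolution, ``choose the purification of $\tilde\rho$ compatibly with $|\eta\rangle$ so that its $\Lambda$-marginal still coincides with $\omega_\Lambda$,'' does not work as stated. The purifications of $\tilde\rho_{R'A'}$ on $\Lambda E$ are exactly the states $(I_{R'A'}\otimes U_{\Lambda E})|\tilde\rho\rangle$ for unitary $U$, and the set of $\Lambda$-marginals reachable this way is governed by the spectrum of $\tilde\rho_{\Lambda E}$, which depends on $\rho$; there is no reason any such $U$ produces $\omega_\Lambda$. Even granting existence, you would additionally need that particular purification to realise the Uhlmann optimum against $|\rho\rangle$, which is a separate and unjustified constraint. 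The closing sentence about ``contracting away the reference-side amplification by monotonicity of fidelity'' is a description of what one would like to happen, not an argument: the reference-side operator $M$ relating $|\rho\rangle$ to $|\Phi\rangle$ has norm up to $\sqrt{2^k}$, and monotonicity of fidelity under partial trace does not by itself cancel that factor once it has entered a vector-norm bound.

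The paper does not attempt this step directly. After Lemma~\ref{lem:barnum-et-al} it invokes Theorem~3 of~\cite{flammia2017limits}, an information--disturbance tradeoff which asserts exactly the universality you are missing: approximate recoverability of $\Lambda$ (in the entanglement-fidelity sense, for all purifications) yields a \emph{single} $\omega_\Lambda$ with $F(\omega_\Lambda\otimes\rho_R,\rho_{\Lambda R})\ge 1-\tfrac{3\epsilon}{2}$ for every code state $\rho$, with the Barnum constant preserved. The work of avoiding the $2^k$ blow-up lives inside that cited result (or its cognates due to Kretschmann--Schlingemann--Werner and B\'eny--Oreshkov). If you want a self-contained proof you must reproduce that argument; the sketch you give does not.
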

	
	\begin{proof}
		From the recovery condition, and Lemma \ref{lem:barnum-et-al}, we can verify that for any purification $\ket{\rho}_{A'R}$ of $\rho_{A'}$ satisfies
		\[
		\fid{\idchan_R \otimes \cR \circ \tr_\Lambda (\proj{\rho})}{\proj{\rho}} \geq 1-\frac{3}{2}\epsilon \ .
		\]
		Then from Theorem 3 of \cite{flammia2017limits}, we can verify that there exists a state $\omega_{\Lambda}$ such that for all states $\rho_{A'R}$ the following is satisfied
		\[
		\sqrt{1-\fid{\omega_\Lambda\otimes \rho_R}{\rho_{\Lambda R}}} \leq \sqrt{\frac{3\epsilon}{2}} \ .
		\]
		This gives $\fid{\omega_{\Lambda}\otimes \rho_R}{\rho_{\Lambda R}} \geq 1-\frac{3 \epsilon}{2}$, and by the monotonicity of the fidelity we obtain $\fid{\omega_{\Lambda}}{\rho_{\Lambda }} \geq 1-\frac{3 \epsilon}{2}$.
	\end{proof}

	\section{Lower bounds for error correction in low dimensions}
	
	In this section, we establish lower bounds on the size of geometrically local circuits for preparing a code state for a quantum code with a large minimum distance and measuring the syndrome of such a stabilizer code. 
	To define a quantum circuit that implements such tasks, we have to choose a subset of qubits $A' \subset A$ that contain the desired outcome. Recall that $A$ denotes the set of all $m$ qubits used by the circuit and $A'$ will be smaller, typically of size $n$. 
		
	\subsection{Entropic properties for code states}

	\begin{lemma}
		\label{lem:corr-is-max-entangled}
		Let $\cC$ be a $ \dprl n,k,d \dprr$ code and $A'$ be a set of size $n$ labelling $n$ qubits. Then for any region $\Lambda \subset A'$ such that $|\Lambda| < d$, and for any state $\rho \in \den{\hbt_{A'}}$ that has a support included in $\cC$, we have 
		\[
		\coh{\Lambda}{\comp{\Lambda}}_\rho = \ent{\Lambda}_\rho \ ,
		\] 
		where $\comp{\Lambda} \equiv A' \setminus \Lambda$.
	\end{lemma}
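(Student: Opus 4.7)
The plan is to exploit correctability of $\Lambda$ (guaranteed by $|\Lambda|<d$) to decouple $\Lambda$ from a purifying reference $R$, and then read off the target identity from standard pure-state entropy relations.

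First, by Lemma~\ref{lem:less-than-d-corr}, since $|\Lambda|<d$ there exists a recovery channel $\cR$ with $\cR\circ\tr_\Lambda(\sigma)=\sigma$ for every state $\sigma$ supported on $\cC$. I would fix any purification $|\rho\rangle_{A'R}$ of $\rho$ and invoke Lemma~\ref{lem:barnum-et-al} with $\epsilon=0$ applied to $\cE=\cR\circ\tr_\Lambda$ (its pure-codeword hypothesis being a special case of the stronger recovery statement), lifting perfect recovery to the purified state:
\[
(\idchan_R\otimes\cR)(\rho_{\comp{\Lambda}R}) = |\rho\rangle\langle\rho|_{A'R}\ .
\]

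Next, I would apply data processing for coherent information to the channel $\idchan_R\otimes\cR$, which acts only on the $\comp{\Lambda}$ factor and outputs $A'$, obtaining $\coh{R}{\comp{\Lambda}}_\rho \geq \coh{R}{A'}_\rho$. Purity of $|\rho\rangle_{A'R}$ gives $\coh{R}{A'}_\rho = \ent{A'}_\rho - \ent{A'R}_\rho = \ent{A'}_\rho$, while the Araki--Lieb bound yields $\coh{R}{\comp{\Lambda}}_\rho \leq \ent{R}_\rho = \ent{A'}_\rho$. Squeezing forces $\coh{R}{\comp{\Lambda}}_\rho = \ent{A'}_\rho$. Unpacking this and using purity once more, $\coh{R}{\comp{\Lambda}}_\rho = \ent{\comp{\Lambda}}_\rho - \ent{R\comp{\Lambda}}_\rho = \ent{\comp{\Lambda}}_\rho - \ent{\Lambda}_\rho$, so
\[
\ent{\comp{\Lambda}}_\rho - \ent{\Lambda}_\rho \;=\; \ent{A'}_\rho\ ,
\]
which rearranges directly to $\coh{\Lambda}{\comp{\Lambda}}_\rho = \ent{\comp{\Lambda}}_\rho - \ent{A'}_\rho = \ent{\Lambda}_\rho$.

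The one subtle point I would take care to verify is that Lemma~\ref{lem:barnum-et-al} legitimately applies when $\rho$ is mixed on $\cC$; this is fine because the lemma's hypothesis only requires perfect recovery on pure codewords, a subfamily of the states on which we already assume exact recovery. Beyond this check there is no essential obstacle: the remainder is a short chain of data-processing and purity-based entropy identities.
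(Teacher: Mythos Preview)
Your proof is correct and follows essentially the same approach as the paper: purify, use Lemma~\ref{lem:barnum-et-al} to lift exact recovery to the purified level, sandwich $\coh{R}{\comp{\Lambda}}$ between two copies of $\coh{R}{A'}$, and then unpack with pure-state entropy identities. The only cosmetic difference is that the paper obtains the upper bound $\coh{R}{\comp{\Lambda}}_\rho \leq \coh{R}{A'}_\rho$ directly from right-monotonicity under $\tr_\Lambda$ rather than via Araki--Lieb, but on a pure state these are the same inequality.
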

	
	\begin{proof}
		Let $\ket{\rho} \in \hbt_{R} \otimes \hbt_{A'}$ be a purification of $\rho_{A'}$. We write $\rho_{RA'} = \dyad{\rho}{\rho}$. As Lemma \ref{lem:less-than-d-corr} guarantees the existence of $\cR$ a recovery map, we have from Lemma \ref{lem:barnum-et-al} that  $\fid{\rho_{RA'}}{\idchan_R \otimes \cR \circ \tr_\Lambda(\rho_{RA'})} =1$. From the right-monotonicity of the coherent information, we have
		\[
		\coh{R}{A'}_{\rho} \geq \coh{R}{\comp{\Lambda}}_{\rho} \geq \coh{R}{A'}_{\idchan_R \otimes \cR (\rho_{R\comp{\Lambda}})} = \coh{R}{A'}_{\rho} \ .
		\]
		One can then verify that $\coh{R}{\comp{\Lambda}}_{\rho} = \coh{R}{A'}_{\rho}$ can be rewritten as $\ent{\Lambda}_{\rho} = \coh{\Lambda}{\comp{\Lambda}}_{\rho}$.
	\end{proof}

	We can then show that 
	
	\begin{lemma}
		
		\label{lem:structure-code}
		Let $\cC$ be a $\dprl n,k,d \dprr$ code and $A'$ be a set of size $n$ labelling $n$ qubits. Then for any partition $\{\Lambda_i\}_i$ of $A'$ such that $|\Lambda_i| < d$, we have, for any state $\rho \in \den{\hbt_{A'}}$ that has a support included in $\cC$
		\[
		\sum_i \ree{\Lambda_i}{\comp{\Lambda_i}}_\rho \geq k \ ,
		\] 
		where $\comp{\Lambda} \equiv A' \setminus \Lambda$.
	\end{lemma}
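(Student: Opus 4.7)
The plan is to combine the three pieces already laid out: the chain $\ree \ge \coh$, the identity $\coh{\Lambda_i}{\comp{\Lambda_i}} = \ent{\Lambda_i}$ on correctable regions (Lemma~\ref{lem:corr-is-max-entangled}), and the approximate indistinguishability of correctable reductions (Lemma~\ref{lem:approx-indistinguishability}) to transport a subadditivity bound from the maximally mixed code state to an arbitrary code state.

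Concretely, I would first apply Proposition~\ref{prop:ree-properties}(3) to lower bound each term by the coherent information, and then use Lemma~\ref{lem:corr-is-max-entangled}, which applies because $|\Lambda_i| < d$ makes $\Lambda_i$ correctable (Lemma~\ref{lem:less-than-d-corr}). This gives
\[
\sum_i \ree{\Lambda_i}{\comp{\Lambda_i}}_\rho \;\ge\; \sum_i \coh{\Lambda_i}{\comp{\Lambda_i}}_\rho \;=\; \sum_i \ent{\rho_{\Lambda_i}} \ .
\]
It now suffices to prove $\sum_i \ent{\rho_{\Lambda_i}} \ge k$.

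The key step is that the right-hand side does not actually depend on which code state $\rho$ we pick. Let $\tau = \Pi_\cC / 2^k$ be the maximally mixed state on $\cC$. Since each $\Lambda_i$ is exactly correctable, Lemma~\ref{lem:approx-indistinguishability} with $\epsilon = 0$ yields a fixed state $\omega_{\Lambda_i}$ with $\rho_{\Lambda_i} = \omega_{\Lambda_i} = \tau_{\Lambda_i}$ for every code state $\rho$. Therefore
\[
\sum_i \ent{\rho_{\Lambda_i}} = \sum_i \ent{\tau_{\Lambda_i}} \ge \ent{\tau_{A'}} = k \ ,
\]
where the inequality is ordinary subadditivity of the von Neumann entropy applied to the partition $\{\Lambda_i\}_i$ of $A'$, and the final equality uses that $\tau$ is maximally mixed on the $2^k$-dimensional subspace $\cC$. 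Chaining with the previous display gives the lemma.

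The only step that needs a little care is the exact version of indistinguishability: Lemma~\ref{lem:approx-indistinguishability} is stated with a general $\epsilon$, but in our setting the region is truly correctable (not approximately), so one can either instantiate $\epsilon = 0$ or invoke the standard exact version coming directly from the Knill--Laflamme conditions. Beyond that the argument is a short assembly of already-proved facts, and I don't anticipate any substantial obstacle; the conceptual content is recognizing that $\tau$ is the natural state on which to apply subadditivity and that indistinguishability lets us pull the bound back to arbitrary code states.
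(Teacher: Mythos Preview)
Your proposal is correct and follows essentially the same argument as the paper: lower bound $E_R$ by the coherent information, invoke Lemma~\ref{lem:corr-is-max-entangled} to turn each term into $\ent{\Lambda_i}_\rho$, use indistinguishability (Lemma~\ref{lem:approx-indistinguishability} at $\epsilon=0$) to replace $\rho_{\Lambda_i}$ by the marginals of the maximally mixed code state, and finish with subadditivity. The only cosmetic difference is notation ($\tau$ versus the paper's $\sigma = 2^{-k}\Pi_\cC$) and the order in which the REE/coherent-information inequality is applied.
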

	
	\begin{proof}
		We have, from Lemma \ref{lem:corr-is-max-entangled}
		\[
		\sum_i \coh{\Lambda_i}{\comp{\Lambda_i}}_\rho = \sum_i \ent{\Lambda_i}_\rho \ .
		\]		
		Write $\sigma = 2^{-k} \Pi_\cC$ with $\Pi_\cC$ the projector on $\cC$. Since every $\Lambda_i$ satisfies $|\Lambda_i| < d$, we can use the indistinguishability of quantum codes Lemma \ref{lem:approx-indistinguishability} (with $\epsilon = 0$), and we have $\rho_{\Lambda_i} = \sigma_{\Lambda_i}$. Further, the coherent information lower bounds the REE  (Proposition \ref{prop:ree-properties}). This gives:	
		\[
		\sum_i \ree{\Lambda_i}{\comp{\Lambda_i}}_\rho  \geq \sum_i \coh{\Lambda_i}{\comp{\Lambda_i}}_\rho = \sum_i \ent{\Lambda_i}_{\sigma} \geq \ent{A'}_{\sigma} = k \ ,
		\]
		where the last inequality stems from the subadditivity of the entropy.
	\end{proof}

	\subsection{Lower bounds in terms of the minimum distance}

	In this part, we show how Lemma \ref{lem:structure-code} implies lower bounds on the depth of circuits preparing code states. For simplicity of exposition, in what follows we focus on $D$-dimensional Euclidean spaces though this can be generalized to more complicated geometries.
	
	In what follows, $\proj{0}_{X}$ denotes a fixed pure state in $\hbt_{X}$ and $\proj{0}_{A'}$ for $A' \subset A$ denotes the product state $\proj{0}$ on all qubits of $A'$.
	
	\begin{theorem} [Encoding circuits]
		\label{thm:encoding-lower-bound}
		Let $\cC$ be an $\dprl n,k,d \dprr$ quantum code. Let $\fW$ be a $D$-dimensional $O(1)$-local quantum circuit of depth $\Delta$ and width $m \geq n$, and let $A' \subset A$ be a subset of the qubits of $\fW$ of size $n$. Assume that the output state $\rho_{AX} = [\fW](\proj{0}_{A} \otimes \proj{0}_X)$ of the circuit is such that $\rho_{A'}$ is fully supported on $\cC$. Then, we have 
		\[
		\Delta \in \Omega \left(\frac{kd^{1/D}}{m}\right) \ .
		\]
	\end{theorem}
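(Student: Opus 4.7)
The plan is to combine Lemma \ref{lem:structure-code}, which lower bounds a sum of REE's over correctable cells by $k$, with the small incremental entangling Lemma \ref{lem:small-incremental-entangling}, which upper bounds each such REE by $3\Delta$ times the geometric boundary. First I would tile $\mathbb{R}^D$ with a grid of hypercubes of side length $\ell = c_0 d^{1/D}$ for a sufficiently small constant $c_0$, chosen so that the $1$-separation of the embedding forces each cube $\Lambda_i \subset A$ to satisfy $|\Lambda_i| \leq \ell^D < d$. Setting $\mu_i := \Lambda_i \cap A'$ then yields a partition of $A'$ with $|\mu_i| < d$, which is exactly the condition Lemma \ref{lem:structure-code} requires.

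Since $\rho_{A'}$ is supported on $\cC$, Lemma \ref{lem:structure-code} gives $\sum_i \ree{\mu_i}{A' \setminus \mu_i}_{\rho_{A'}} \geq k$. Tracing out $X$, the qubits of $A \setminus A'$, and those in $\Lambda_i \setminus A'$ from the appropriate parties is a separable operation (each trace is local to one party, so has product Kraus operators), so the monotonicity property in Proposition \ref{prop:ree-properties} yields $\ree{\Lambda_i}{(A \setminus \Lambda_i) \cup X}_{\rho_{AX}} \geq \ree{\mu_i}{A' \setminus \mu_i}_{\rho_{A'}}$. On the other hand, the circuit begins in the product state $\proj{0}_A \otimes \proj{0}_X$ where every REE vanishes, so iterating Lemma \ref{lem:small-incremental-entangling} across the $\Delta$ layers of $\fW$ gives $\ree{\Lambda_i}{(A \setminus \Lambda_i) \cup X}_{\rho_{AX}} \leq 3\Delta |\bdry \Lambda_i|$. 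Chaining these bounds:
\begin{equation*}
k \leq \sum_i \ree{\mu_i}{A' \setminus \mu_i}_{\rho_{A'}} \leq \sum_i \ree{\Lambda_i}{(A \setminus \Lambda_i) \cup X}_{\rho_{AX}} \leq 3\Delta \sum_i |\bdry \Lambda_i|.
\end{equation*}

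The main obstacle will be a sharp geometric bound on $\sum_i |\bdry \Lambda_i|$. A vertex $v$ lies in $\bdry \Lambda_i$ for some $i$ only if it has a connectivity-graph neighbor in a different cube; since edges span Euclidean distance at most $c = O(1)$, this forces $\eta(v)$ to lie within distance $c$ of a grid face. Under a uniformly random translation of the grid, any fixed qubit satisfies this condition with probability at most $2Dc/\ell$ by a union bound over the $D$ axes, so by averaging there exists a translation for which the total number of such ``shell'' qubits is at most $(2Dc/\ell) m$. Since $O(1)$-locality also implies that each such qubit contributes to the boundary of only $O(1)$ cells, this gives $\sum_i |\bdry \Lambda_i| = O(m/\ell) = O(m/d^{1/D})$. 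Plugging into the previous display yields $\Delta \in \Omega(k d^{1/D}/m)$, as claimed.
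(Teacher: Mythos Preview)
Your proof is correct and follows the same overall strategy as the paper: combine Lemma~\ref{lem:structure-code} with Lemma~\ref{lem:small-incremental-entangling} via the monotonicity of the REE under separable operations, applied to a geometric partition of $A$ into cells each containing fewer than $d$ qubits. Two minor differences are worth noting. First, the paper uses pigeonhole to extract a single cell $\Gamma_{i'}$ with $\ree{\Lambda_{i'}}{A'\setminus\Lambda_{i'}} \geq k/\ell$ and bounds only that one term, whereas you sum the REE over all cells; the paper's remark immediately following the theorem records that the summed inequality $\Delta \geq k / (3\sum_i |\bdry \Gamma_i|)$ also follows, so the two routes are equivalent. Second, for the geometric partition itself, the paper invokes the separator-based Lemma~\ref{lem:geometric-partitioning} to produce $\ell = O(m/\lambda)$ cells of size at most $\lambda$ with boundary $O(\lambda^{(D-1)/D})$, while you use an explicit hypercube tiling together with a random shift to bound $\sum_i |\bdry \Lambda_i| = O(m/d^{1/D})$ directly. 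Your construction is more elementary and self-contained for Euclidean geometries; the separator-theorem route is black-boxed but ports more readily to other connectivity graphs admitting sublinear separators.
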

	We note that the proof shows more generally that for any circuit $\fW$ with connectivity graph $G$ and any partition $\{\Gamma_i\}_{i=1}^{\ell}$ of $A$, we have $\Delta \geq \frac{k}{3\sum_{i=1}^{\ell} |\bdry \Gamma_i|}$.   
	\begin{proof}
		On one hand, by Lemma \ref{lem:structure-code}, any partition $\{\Gamma_i\}_{i=1}^{\ell}$ of $A$ such that $|\Gamma_i| < d$ induces a partition $\{\Lambda_i\}_{i=1}^{\ell}$ of $A'$, with $\Lambda_i = \Gamma_i \cap A'$, and $|\Lambda_i| < d$. We can then guarantee that there exists $i'$ such that
		\[
		\ree{\Lambda_{i'}}{ A' \setminus \Lambda_{i'} }_{\rho}  =  \ree{\Lambda_{i'}}{A' \setminus \Lambda_{i'}}_{\rho} \geq k/\ell \ .
		\]
		
		On the other hand, the Small Incremental Entangling Lemma \ref{lem:small-incremental-entangling} guarantees that for any $\Gamma_i$ we have
		\begin{align*}
			\ree{\Gamma_i}{\comp{\Gamma_i}X}_{\rho} &= \ree{\Gamma_i}{\comp{\Gamma_i}X}_{\cE_{\Delta} \circ ... \circ \cE_1 (\dyad{0}{0}_{A} \otimes \dyad{0}{0}_{X})} \\
			&\leq 3 \Delta |\bdry \Gamma_i| + \ree{\Gamma_i}{\comp{\Gamma_i}X}_{\dyad{0}{0}_{A} \otimes \dyad{0}{0}_{X}} \\
			& = 3 \Delta |\bdry \Gamma_i| \ ,
		\end{align*}
		where we recall the notation $\comp{\Gamma_i} = A \setminus \Gamma_i$ and that $\dyad{0}{0}_{A}$ refers to the product state where all qubits of $A$ are set to $\proj{0}$.
		
		By the monotonicity of the REE under separable operations, we can now obtain
		\begin{align*}
			3\Delta|\bdry \Gamma_{i'}| &\geq \ree{\Gamma_{i'}}{\comp{\Gamma_{i'}}X}_{\rho}  \\
			&\geq \ree{\Lambda_{i'}}{A' \setminus \Lambda_{i'} }_{\rho}  \\
			&\geq k/\ell \ .
		\end{align*}
		
		Since the circuit is $D$-dimensional, there always exist a partition $\{\Gamma_i\}_{i=1}^{\ell}$ of a $D$-dimensional graph such that $|\Gamma_i| \leq \lambda$,  $|\bdry \Gamma_i| \in O(\lambda^{D-1/D})$, $\ell \in O(m/\lambda)$, for any $\lambda$, see Lemma \ref{lem:geometric-partitioning}. Picking $\lambda = d-1$ and applying the inequality we obtained previously, we have $O(\Delta \lambda^{(D-1)/D}) \geq k \lambda /m$, or $\Delta \in \Omega(\frac{k \lambda^{1/D}}{m})$. Since $\lambda = d-1$, we obtain the desired result.
	\end{proof}
	
	
	Next, we move to the problem of syndrome extraction for stabilizer codes.
	\begin{theorem} [Syndrome extracting circuit]
		\label{thm:syndrome-circuit-bound}
		Let $\cC$ be a $\dbrl n,k,d \dbrr$ stabilizer code. Assume that $A' \subset A$ and $\fW  = (\cE_t)_{t=1}^{\Delta}$ is a circuit such that for any $\rho \in \den{\hbt_{A'}}$ we have
		\[
		\tr_{\comp{A'}} \circ [\fW] \left(\rho_{A'} \otimes \dyad{0}{0}_{\comp{A'}} \otimes \proj{0}_{X} \right) = \sum_{s} \Pi_s \rho \Pi_s \otimes \dyad{s}{s}_{X} \ ,
		\]
		where $\Pi_s$ is the projector onto the syndrome subspace $\cC_s$.
		Then $\Delta$ obeys
		\[
		\Delta \in \Omega \left(\frac{k d^{1/D}}{m}\right) \ .
		\]
	\end{theorem}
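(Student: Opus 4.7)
The plan is to mirror the proof of Theorem~\ref{thm:encoding-lower-bound}. Instead of feeding the all-zero state and ending on a pure encoded state, I would feed the \emph{maximally mixed} state $\idty_{A'}/2^n$; the assumption on $\fW$ then produces a classically-flagged ensemble of maximally mixed code states of the syndrome subspaces $\cC_s$. Each such code state carries large REE by Lemma~\ref{lem:structure-code}, while the input is a product state, so the small-incremental-entangling bound will force $\Delta$ to be large.

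Concretely, the initial state $\rho_0 = (\idty_{A'}/2^n) \otimes \dyad{0}{0}_{\comp{A'}} \otimes \dyad{0}{0}_X$ factorizes as a tensor product over all qubits of $A$ together with $X$, so $\ree{\Gamma}{\comp{\Gamma}X}_{\rho_0} = 0$ for every $\Gamma \subset A$. By the hypothesis on $\fW$, the reduced state on $A'X$ after the circuit equals
\[
\tau_{A'X} = \sum_{s} p_s \left(\Pi_s/2^k\right) \otimes \dyad{s}{s}_X, \qquad p_s = 2^{-(n-k)},
\]
and each $\Pi_s/2^k$ is the maximally mixed state of the syndrome subspace $\cC_s$, itself an $\dbrl n, k, d \dbrr$ stabilizer code (Pauli-related to $\cC$).

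I would then take the same geometric partition $\{\Gamma_i\}_{i=1}^{\ell}$ of $A$ as in Theorem~\ref{thm:encoding-lower-bound}, obtained via the partitioning lemma with $\lambda = d-1$, so that $|\Gamma_i| < d$, $|\bdry \Gamma_i| \in O(d^{(D-1)/D})$ and $\ell \in O(m/d)$, and set $\Lambda_i = \Gamma_i \cap A'$. Lemma~\ref{lem:small-incremental-entangling} combined with the vanishing initial REE gives $\ree{\Gamma_i}{\comp{\Gamma_i}X}_{[\fW](\rho_0)} \leq 3\Delta |\bdry \Gamma_i|$, and since tracing out $\comp{A'}$ is a local operation on both sides of the cut, monotonicity of REE under separable channels yields $\ree{\Gamma_i}{\comp{\Gamma_i}X}_{[\fW](\rho_0)} \geq \ree{\Lambda_i}{\comp{\Lambda_i}X}_{\tau}$, where $\comp{\Lambda_i} = A' \setminus \Lambda_i$.

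It remains to relate $\ree{\Lambda_i}{\comp{\Lambda_i}X}_\tau$ to the individual REEs of the code states $\Pi_s/2^k$. Because $\tau$ is classical on $X$, dephasing in the syndrome basis on $X$ leaves $\tau$ invariant while preserving separability across the cut $\Lambda_i : \comp{\Lambda_i}X$, so the optimization defining REE can be restricted to states of the form $\sum_s q_s \tilde\sigma^s \otimes \dyad{s}{s}_X$ with $\tilde\sigma^s \in \sep{\hbt_{\Lambda_i}}{\hbt_{\comp{\Lambda_i}}}$. The standard chain rule for relative entropy on such classically-flagged states then yields $\ree{\Lambda_i}{\comp{\Lambda_i}X}_\tau \geq \sum_s p_s \ree{\Lambda_i}{\comp{\Lambda_i}}_{\Pi_s/2^k}$. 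Summing over $i$ and invoking Lemma~\ref{lem:structure-code} for each code $\cC_s$ gives $\sum_i \ree{\Lambda_i}{\comp{\Lambda_i}X}_\tau \geq \sum_s p_s \cdot k = k$, hence $3\Delta \sum_i |\bdry \Gamma_i| \geq k$. Plugging in $\sum_i |\bdry \Gamma_i| \in O(m\, d^{-1/D})$ yields $\Delta \in \Omega(k d^{1/D}/m)$. The only mildly non-routine step is the classical-flag decomposition of REE in the last paragraph; everything else is a direct reuse of the encoding-circuit argument.
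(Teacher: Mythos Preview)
Your argument is correct, but it is genuinely different from the paper's. The paper does not feed in the maximally mixed state; instead it feeds in $\proj{0}^{\otimes m}$, obtains $\sum_s \Pi_s \proj{0}^{\otimes n}\Pi_s \otimes \proj{s}$ on $A'X$, and then \emph{appends one extra circuit layer} with Kraus operators $\{P_s \otimes \ket{0}\bra{s}\}_s$, where $P_s\in\cP_n$ maps $\cC_s$ back to $\cC$. Because each $P_s$ is a tensor product of single-qubit Paulis, this extra layer is compatible with any connectivity graph, so the resulting depth-$(\Delta+1)$ circuit prepares a state on $A'$ supported in $\cC$ from a product input, and Theorem~\ref{thm:encoding-lower-bound} applies verbatim.

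Your route avoids modifying the circuit and works directly with the output of $\fW$ on the maximally mixed input, at the cost of one extra ingredient: the convexity of REE on classically-flagged ensembles (your ``classical-flag decomposition''), which you handle correctly via dephasing on $X$ and the block-diagonal relative-entropy identity. You also need (and use) the fact that each syndrome subspace $\cC_s$ is itself an $\dbrl n,k,d\dbrr$ code, so that Lemma~\ref{lem:structure-code} applies to every $\Pi_s/2^k$. The paper's reduction is shorter and sidesteps both points by collapsing all syndrome sectors into $\cC$ before invoking the encoding bound; your argument is self-contained and would transfer to situations where one prefers not to, or cannot, append a correction layer.
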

	
	\begin{proof}
		%
		
		The Hilbert space on $n$ qubits naturally splits as $ \hbt = \bigoplus_{s \in \{-1,+1\}^{n-k}}  \cC_s$. Applying the circuit $\fW$ to the state $\proj{0}^{\otimes m}$, we obtain after tracing out $\comp{A'}$ the state $\sum_{s \in \{-1,+1\}^{n-k}} \Pi_s \proj{0}^{\otimes n} \Pi_s \otimes \proj{s}$ by assumption. Note that for any $s \in \{-1,+1\}^{n-k}$, there exists an operator $P_{s} \in \cP_n$ (which can be seen as a correction operator for error syndrome $s$) such that for any $\ket{\psi} \in \cC_s$, we have $P_{s} \ket{\psi} \in \cC$. 
		We add one step to this circuit: a recovery operation with Kraus elements $\{P_s \otimes \ket{0}\bra{s}\}_{s \in \{-1,+1\}^{n-k}}$. Note that this map is in $\sepc{\hbt_{A}}{\hbt_{X}}$ and as a result we obtain a circuit of depth $\Delta + 1$. The state obtained on the register $A'$ is then $\sum_{s} P_s \Pi_s \proj{0}^{\otimes n} \Pi_s P_s$. As $P_s \Pi_s \ket{0}^{\otimes n} \in \cC$ for any $s$, this state is supported on $\cC$ and we can apply Theorem \ref{thm:encoding-lower-bound}:
		\[
		\Delta + 1 \in \Omega(k d^{1/D} /m) \ .
		\]	
		%
		%
		%
	\end{proof}

		%
	%
	%

	\section{Lower bounds for error correction for noisy circuits}

	In this section, instead of making an assumption on the minimum distance of the code, we make an assumption on the logical error rate that is achieved by the error correction module. For a given noise model, we say that a circuit defining an error correction module has logical error rate $\delta$ if after the (ideal) error correction module is applied, the output remains $\delta$-close to the correct state. We show that for any quantum code with an error correction module that is geometrically local, the memory overhead has to grow when the desired logical error rate decreases.

	For this section, it is convenient to describe a code $\cC$ by an encoding isometry $U : \chbt{k} \to \chbt{n}$, i.e., $\cC = \mathrm{Im}(U)$. We also introduce the systems $R$ and $L$ corresponding to $k$ qubits and let $\Phi_{RL} \in \den{\hbt_{R} \otimes \hbt_{L}}$ be a maximally entangled state. In addition let $\cU \in \cptp{\hbt_{L}, \hbt_{A'}}$ be the encoding quantum channel that maps the logical information to the code space:
	\[
	\cU( \cdot ) \equiv U \cdot U^{\dagger} \ .
	\]
	We also define the preparation map $\cP \in \cptp{\mathbb{C}, \hbt_{\comp{A'}} \otimes \hbt_{X}}$ as 
	\[
	\cP( \cdot ) \equiv \tr(\cdot) \proj{0}_{\comp{A'}} \otimes \proj{0}_{X} \ .	
	\]
	\begin{definition}
		\label{def:ec-module}
		An error-correction module for the code defined by the isometry $U$ is a family of circuit $(\fW_j)_{j=1}^{J}$ with $\fW_{j} = (\cE_{t,j})_{t=1}^{\Delta}$ all acting on the same systems $AX$, and a choice of subset $A' \subset A$ of size $n$.  We say that such a module has logical error rate $\delta$ if 
		\begin{align*}
			F\left( \idchan_{R} \otimes \left( \tr_{\comp{A'}X} \circ [\fW]_{p} \circ (\cU \otimes \cP) \right)(\Phi_{RL}), 
			\idchan_{R} \otimes \cU(\Phi_{RL}) \right) \geq 1-\delta \ ,
		\end{align*}
		where $[\fW]_p$ is the map obtained by applying noise before each circuit $\fW_j$ and composing all the circuits:
		\[
		[\fW]_{p} = \bigcirc_{j=1}^{J}(\cE_{\Delta,j} \circ ... \circ \cE_{1,j} \circ (\cN_p^{\otimes m} \otimes \idchan_{X}))_i \ ,
		\]	
		with $\cN_p$ the $p$-depolarizing channel defined as
		\[
		\cN_p(\rho) = (1-p)\rho + p\tr(\rho) \idty/2 \ .
		\]
		The number $\Delta$ is called the depth of the error correction module.
	\end{definition}
	
	\subsection{Lower bound on entanglement}
	
	In this section we show how the existence of a good error-correction module implies that the codestates of $\cC$ are highly entangled. The following lemma can be thought of as analogous to Lemma~\ref{lem:structure-code} where instead of imposing a constraint on the minimum distance of the code, we assume that the code can correct errors with good accuracy.
	\begin{lemma}
		\label{lem:structure-unitary}
		Using the same notation as in the paragraph preceding Defintion~\ref{def:ec-module}, assume there exists a decoding map $\cD \in \cptp{\hbt_{A} \otimes \hbt_{X}, \hbt_{A'}}$ such that
		\begin{align*}
			F\left( \idchan_{R} \otimes \left( \cD \circ (\cN^{\otimes m}_p \otimes \idchan_{X}) \circ (\cU \otimes \cP) \right)(\Phi_{RL}), 
			\idchan_{R} \otimes \cU(\Phi_{RL}) \right) \geq 1-\epsilon \ .
		\end{align*}
		Then for any partition $\{\Lambda_i\}_i, \Lambda_i \subset A'$ of $A'$, we have
		\[
		\sum_i \ree{\Lambda_i}{\ACL}_{\idchan_R\otimes \cU (\Phi_{RL})} \geq k - \sum_i 2\sqrt{{\epsilon}/{p^{|\Lambda_i|}}}|\Lambda_i| + g(\sqrt{{\epsilon}/{p^{|\Lambda_i|}}}) \ ,
		\]
		where $\comp{\Lambda_i} = A' \setminus \Lambda_i$.
	\end{lemma}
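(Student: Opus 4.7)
The plan is to adapt the proof of Lemma~\ref{lem:structure-code} by isolating an ``erasure of $\Lambda_i$'' event from the depolarizing noise, and then replacing the exact identity $\coh{\Lambda_i}{\comp{\Lambda_i}}_\sigma = \ent{\Lambda_i}_\sigma$ of Lemma~\ref{lem:corr-is-max-entangled} with an approximate version.

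\textbf{Extracting an erasure.} Writing $\cN_p = (1-p)\idchan + p\,\cR_{\mathrm{rep}}$ with $\cR_{\mathrm{rep}}(\eta) = \tr(\eta)\,\idty/2$, the product $\cN_p^{\otimes \Lambda_i}$ contains the all-replacement term $p^{|\Lambda_i|}\cR_{\mathrm{rep}}^{\otimes \Lambda_i}$, so
\[
\cN_p^{\otimes m} \;=\; p^{|\Lambda_i|}\bigl(\cR_{\mathrm{rep}}^{\otimes \Lambda_i}\otimes \cN_p^{\otimes A\setminus \Lambda_i}\bigr) + (1-p^{|\Lambda_i|})\,\cM
\]
for some CPTP $\cM$. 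Inserting this decomposition in the error-correction module writes its output as a convex combination $\omega = p^{|\Lambda_i|}\omega_1 + (1-p^{|\Lambda_i|})\omega_2$. Since $\tau := (\idchan_R \otimes \cU)(\Phi_{RL})$ is pure, the hypothesis $\tr(\omega \tau) \geq 1-\epsilon$ and $\tr(\omega_2 \tau) \leq 1$ force $F(\omega_1, \tau) \geq 1 - \epsilon_i$ with $\epsilon_i := \epsilon/p^{|\Lambda_i|}$. Moreover, on the erasure branch the qubits in $\Lambda_i$ decouple from the rest before $\cD$ acts, so $\omega_1 = (\idchan_R \otimes \cR_i)(\tau_{R\comp{\Lambda_i}})$ for the explicit CPTP map $\cR_i:\hbt_{\comp{\Lambda_i}}\to \hbt_{A'}$ that appends $\idty_{\Lambda_i}/2^{|\Lambda_i|}$, the noisy $\comp{A'}$-ancillas and $\proj{0}_X$, applies $\cN_p$ outside $\Lambda_i$, and runs $\cD$.

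\textbf{Per-region bound.} Following Lemma~\ref{lem:corr-is-max-entangled}, right-monotonicity of the coherent information (Proposition~\ref{prop:coh-preperties}) gives $\coh{R}{\comp{\Lambda_i}}_\tau \geq \coh{R}{A'}_{\omega_1}$, and its continuity yields $\coh{R}{A'}_{\omega_1} \geq \coh{R}{A'}_\tau - 2\sqrt{\epsilon_i}\,\log\dim\hbt_R - g(\sqrt{\epsilon_i})$. Since $\tau$ is maximally entangled between $R$ and $\cU(\hbt_L)$, $\coh{R}{A'}_\tau = k$. Setting $\sigma := \tau_{A'}$ and using purity of $\tau$ to substitute $\coh{R}{\comp{\Lambda_i}}_\tau = \ent{\comp{\Lambda_i}}_\sigma - \ent{\Lambda_i}_\sigma$ and $\coh{\Lambda_i}{\comp{\Lambda_i}}_\sigma = \ent{\comp{\Lambda_i}}_\sigma - k$ then rearranges the above into
\[
\coh{\Lambda_i}{\comp{\Lambda_i}}_\sigma \;\geq\; \ent{\Lambda_i}_\sigma - 2\sqrt{\epsilon_i}\,|\Lambda_i| - g(\sqrt{\epsilon_i}),
\]
where to obtain the sharp $|\Lambda_i|$ factor rather than the cruder $k=\log\dim\hbt_R$ that falls out of the continuity step above, one re-routes the argument through the REE continuity bound of Proposition~\ref{prop:ree-properties} (whose first-argument dimension is $|\Lambda_i|$ and whose numerical constant is half of that for the coherent information) applied between $\sigma$ and the ``corrected'' marginal $\omega_{1,A'}$, together with a structural lower bound $\ree{\Lambda_i}{\comp{\Lambda_i}}_{\omega_{1,A'}} \geq \ent{\Lambda_i}_\sigma$ on that marginal.

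\textbf{Assemble.} Applying $\ree\geq\coh$ (Proposition~\ref{prop:ree-properties}), summing the per-region bound over $i$, and invoking subadditivity $\sum_i \ent{\Lambda_i}_\sigma \geq \ent{A'}_\sigma = k$ (since $\{\Lambda_i\}$ partitions $A'$ and $\sigma = \Pi_\cC/2^k$ has $\ent{A'}_\sigma = k$) closes the argument. The main obstacle I foresee is securing the sharp $|\Lambda_i|$-prefactor in the per-region step, as the naive continuity of $\coh{R}{A'}$ only yields a $k$-factor; the extraction of the erasure and the final subadditivity step are structurally identical to Lemma~\ref{lem:structure-code}.
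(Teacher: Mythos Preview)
Your overall plan---extract the erasure event from $\cN_p^{\otimes m}$ with weight $p^{|\Lambda_i|}$, deduce approximate recoverability from $\comp{\Lambda_i}$, turn this into a per-region lower bound on $\coh{\Lambda_i}{\comp{\Lambda_i}}$, then sum and use subadditivity---is exactly the paper's structure. The erasure extraction and the final assembly step are correct as written.

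The gap is precisely where you flag it: your route to the sharp $|\Lambda_i|$ prefactor does not work. The ``structural lower bound'' $\ree{\Lambda_i}{\comp{\Lambda_i}}_{\omega_{1,A'}} \geq \ent{\Lambda_i}_\sigma$ is unjustified. The state $\omega_{1,A'} = \cR_i(\tau_{\comp{\Lambda_i}})$ is produced by a channel that sees only $\comp{\Lambda_i}$; nothing forces its entanglement across the $\Lambda_i{:}\comp{\Lambda_i}$ cut to be at least $\ent{\Lambda_i}_\sigma$ (indeed $\cR_i$ could output something nearly separable on $A'$ while still being close to $\tau$ on $RA'$ only through correlations with $R$). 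So the REE-continuity reroute, as sketched, does not close the gap.

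The paper avoids this by not comparing $\ree{\Lambda_i}{\comp{\Lambda_i}}$ on two different states at all. Instead it uses, for the pure state $\tau_{RA'}$, the exact identity
\[
\ent{\Lambda_i}_\tau \;=\; \coh{\Lambda_i}{R\comp{\Lambda_i}}_\tau \;=\; \coh{\Lambda_i}{\comp{\Lambda_i}}_\tau + I(\Lambda_i{:}R\,|\,\comp{\Lambda_i})_\tau,
\]
and then bounds the conditional mutual information. The point is that your erasure branch gives $\omega_1 = (\idchan_R \otimes \cR_i)(\tau_{R\comp{\Lambda_i}})$ with $F(\omega_1,\tau)\geq 1-\epsilon_i$, which is exactly the hypothesis of the approximate-Markov lemma (Lemma~\ref{lem:approximate-markov}): $\tau$ is close to a state in which $\Lambda_i$ is generated from $\comp{\Lambda_i}$ alone, hence $I(\Lambda_i{:}R\,|\,\comp{\Lambda_i})_\tau$ is small. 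Because the CMI is symmetric and the continuity can be taken in the $\Lambda_i$ system, the error term comes out proportional to $|\Lambda_i|$ rather than $k$. Your monotonicity-plus-continuity chain $\coh{R}{\comp{\Lambda_i}}_\tau \geq \coh{R}{A'}_{\omega_1} \geq \coh{R}{A'}_\tau - (\cdots)$ is, after rearrangement, the \emph{same} CMI bound, but with continuity applied to $\coh{R}{\cdot}$, which forces the $\log\dim\hbt_R = k$ factor. Rewriting it as a CMI and invoking continuity in the small system $\Lambda_i$ is what buys you the sharp constant.
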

	\begin{proof}
		We can write $\cN_p^{\otimes m} = p^{|\Lambda_i|}\cN_{\Lambda_i} + (1-p^{|\Lambda_i|})\cM_{i}$, with $\cN_{\Lambda_i} = (\idty_{\Lambda_i}/2^{|\Lambda_i|}\tr_{\Lambda_i})\otimes \cN^{\otimes |\comp{\Lambda_i}|}$ and $\cM_i$ some quantum channel. 
		
		Using the assumed bound on the fidelity together with Lemma \ref{lem:convex-is-close}, 
		we get
		\begin{align}
		\label{eq:bound_fid_lem_lb_ent}
		F\left( \idchan_{R} \otimes \left( \cD \circ (\cN_{\Lambda_i} \otimes \idchan_{X}) \circ (\cU \otimes \cP) \right)(\Phi_{RL}), 
			\idchan_{R} \otimes \cU(\Phi_{RL}) \right) \geq 1-\epsilon/{p^{|\Lambda_i|}} \ .
		\end{align}
		
		Let us define the state $\tau_{RA'X} = \idchan_R \otimes \cU (\Phi_{RL}) \otimes \proj{0}_{X}$. Then we have
		\begin{align*}
			\ent{ \Lambda_i }_{\tau} = \coh{\Lambda_i}{R\ACL}_{\tau} &= \coh{\Lambda_i}{\ACL}_{\tau} + I(\Lambda_i:R|\ACL)_{\tau} \\
			&= \coh{\Lambda_i}{\ACL}_{\tau} + I(\Lambda_i:R|X\ACL)_{\tau} \ .
		\end{align*}
		In order to upper bound, $I(\Lambda_i:R|X\ACL)_{\tau}$, we use the inequality \eqref{eq:bound_fid_lem_lb_ent}. For that, consider the recovery channel $\cR \in \cptp{\hbt_{\comp{\Lambda_i}} \otimes \hbt_{X}, \hbt_{A'} \otimes \hbt_{X}}$ defined by $\cR(\omega_{\comp{\Lambda_i} X}) = \cD(\idty_{\Lambda_i}/2^{|\Lambda_i|} \otimes \cN_p^{\otimes |\comp{A'}|}(\proj{0}_{\comp{A'}}) \otimes \cN_p^{\otimes |\Lambda_i|}(\omega_{\comp{\Lambda_i} X})) \otimes \proj{0}_{X}$. Then it is easy to see that 
		\[
		F(\idchan_{R} \otimes \cR(\tr_{\Lambda_i} \circ (\cU \otimes \cP)(\Phi_{RL})), \tau_{RA'X}) \geq 1-\epsilon/p^{|\Lambda_i|} \ .
		\]
		By Lemma \ref{lem:approximate-markov}, we obtain
		\[
		\ent{ \Lambda_i }_{\tau} \leq \coh{\Lambda_i}{\ACL}_{\tau} + 2\sqrt{{\epsilon}/{p^{|\Lambda_i|}}}|\Lambda_i| + g(\sqrt{{\epsilon}/{p^{|\Lambda_i|}}}) \ .
		\]
		 Equivalently 
		\[
		\coh{\Lambda_i}{\ACL}_{\tau} \geq \ent{\Lambda_i}_{\tau} - 2\sqrt{{\epsilon}/{p^{|\Lambda_i|}}}|\Lambda_i| - g(\sqrt{{\epsilon}/{p^{|\Lambda_i|}}}) \equiv  \ent{\Lambda_i}_{\tau}  - h_{|\Lambda_i|} \ .
		\]
		
		However, as it stands this bound is not very restrictive, as $\ent{\Lambda_i}_{\tau}$ could take any value. To resolve this issue, we sum over the individual contributions, which yields
		\begin{align*}
			\sum_i \coh{\Lambda_i
			}{\comp{\Lambda_i}}_{\tau} &\geq \sum_i \ent{\Lambda_i
			}_{\tau} - h_{|\Lambda_i|}\\ &\geq k- \sum_i h_{|\Lambda_i|} \ .
		\end{align*}
		Since $\sum_i \ent{\Lambda_i
		} \geq \ent{A'} = k$. The result then follows from Lemma \ref{lem:coh-lower-bounds-ree}.
	\end{proof}

	\subsection{Upper bound on entanglement}
	\label{subsec:upper-bound}
	
	We now describe how a local noisy circuit is limited in its ability to generate highly entangled states. We will later leverage this element in our proof of our main theorem: to preserve information, the circuit needs to produce entangled states, which it cannot, due to its locality.
	
	More specifically, this lemma formalizes the following: as the system is affected by noise, a region $\Gamma \subset A$ can only recover $O(|\bdry\Gamma|)$ qubits of entanglement between any two noise layers. Naturally, this lower bounds the ability of any $\Lambda \subset \Gamma$ to be entangled with the rest of the system. 
	
	The motivation behind distinguishing $\Lambda$ and $\Gamma$ is the following. Later, we will take part of $\cC$ to live on $\Lambda$, which is therefore entangled with the rest of the system. However it is hard to partition the data qubits in a manner that guarantees a small boundary to each region, as we have no information regarding how they are arranged vis-a-vis the ancillary qubits. Indeed, it is easier to partition the whole system, and $\Lambda \subset \Gamma$ inherit the $O(|\bdry \Gamma|)$ bound on the rate at which it can be entangled with the rest of the system.

	\begin{lemma}
		\label{lem:depth-bound}
		Let $\rho_{RAX} \in \den{\hbt_{R} \otimes \hbt_{A} \otimes \hbt_{X}}$ be a state on $RAX$ of the form
		\[
		\rho_{RAX} = (\idchan_R \otimes \cL \circ (\cN_p^{\otimes m} \otimes\idchan_X)) (\sigma_{RAX} ) \ ,
		\]		
		for some arbitrary state $\sigma_{RAX}$ and $\cL \in \cptp{\hbt_{A} \otimes \hbt_{X}}$ be the quantum channel representing a quantum circuit of depth $\Delta$. Let $A' \subset A$ be an arbitrary subset of $A$ and
		assume that $\fid{\rho_{RA'}}{\xi_{RA'}} \geq 1-\delta$, where $\xi_{RA'} \in \den{\hbt_{R} \otimes \hbt_{A'}}$ is a pure state on $RA'$. Then for any $\Gamma \subset A$ qubits, we have
		\[
		3 \Delta |\bdry \Gamma| \geq  \ree{\Lambda}{\comp{\Lambda}}_{\xi} - \sqrt{{\delta}/{p^{|\Gamma|}}} |\Lambda| - g(\sqrt{{\delta}/{p^{|\Gamma|}}})
		\]
		where $\Lambda \equiv \Gamma \cap A'$, $\comp{\Lambda} \equiv A' \setminus \Lambda$.
	\end{lemma}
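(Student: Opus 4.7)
The plan is to isolate the branch of the noise where the qubits of $\Gamma$ are completely ``erased'' and replaced by the maximally mixed state (which occurs with weight $p^{|\Gamma|}$), and then bound how much entanglement across the $\Lambda : \comp\Lambda$ cut the circuit $\cL$ can build on such a decoupled input. The argument has three moves: a convex-mixture fidelity bound, the small incremental entangling lemma, and continuity of REE.

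First, I would decompose the noise as
\[
\cN_p^{\otimes m} = p^{|\Gamma|}\,\cN_\Gamma^{\mathrm{er}} + (1-p^{|\Gamma|})\,\cM,
\]
where $\cN_\Gamma^{\mathrm{er}}(\rho_A) \equiv \cN_p^{\otimes |\comp\Gamma|}(\tr_\Gamma \rho_A) \otimes (\idty_\Gamma / 2^{|\Gamma|})$ and $\cM$ is some CPTP map, and define the ``erased'' state $\tilde\rho_{RAX} \equiv (\idchan_R \otimes \cL \circ (\cN_\Gamma^{\mathrm{er}} \otimes \idchan_X))(\sigma_{RAX})$. Then $\rho_{RAX} = p^{|\Gamma|}\tilde\rho_{RAX} + (1-p^{|\Gamma|})\mu_{RAX}$ for some state $\mu$, and the standard convex-mixture fidelity bound (Lemma~\ref{lem:convex-is-close}) applied to the hypothesis $F(\rho_{RA'},\xi_{RA'}) \geq 1-\delta$ yields $F(\tilde\rho_{RA'},\xi_{RA'}) \geq 1 - \delta/p^{|\Gamma|}$.

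Next I would upper bound the REE of $\tilde\rho$. At the input to $\cL$, namely $(\cN_\Gamma^{\mathrm{er}} \otimes \idchan_X)(\sigma_{RAX})$, the register $\Gamma$ is maximally mixed and in a tensor product with everything else, so $\ree{\Gamma}{R\comp\Gamma X}$ on that input is $0$. Each layer $\cE_t$ of $\cL$ is separable along $A:X$ and acts trivially on $R$, hence is separable along $A:XR$, so Lemma~\ref{lem:small-incremental-entangling} applies with $X$ absorbed into $XR$; iterating over the $\Delta$ layers gives $\ree{\Gamma}{R\comp\Gamma X}_{\tilde\rho} \leq 3\Delta|\bdry\Gamma|$. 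Since $\Lambda \subset \Gamma$ and $\comp\Lambda = A' \setminus \Lambda \subset \comp\Gamma$, the partial trace from $RAX$ down to $A'$ is a separable operation across the $\Lambda:\comp\Lambda$ cut, and monotonicity of REE (Proposition~\ref{prop:ree-properties}(2)) yields $\ree{\Lambda}{\comp\Lambda}_{\tilde\rho} \leq 3\Delta|\bdry\Gamma|$.

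Finally, monotonicity of fidelity under tracing out $R$ gives $F(\tilde\rho_{A'},\xi_{A'}) \geq 1 - \delta/p^{|\Gamma|}$, and continuity of REE (Proposition~\ref{prop:ree-properties}(1), with $\log\dim\hbt_\Lambda = |\Lambda|$) gives
\[
\ree{\Lambda}{\comp\Lambda}_\xi \;\leq\; \ree{\Lambda}{\comp\Lambda}_{\tilde\rho} + \sqrt{\delta/p^{|\Gamma|}}\,|\Lambda| + g(\sqrt{\delta/p^{|\Gamma|}}).
\]
Chaining with $\ree{\Lambda}{\comp\Lambda}_{\tilde\rho} \leq 3\Delta|\bdry\Gamma|$ and rearranging delivers the claim. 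The only real bookkeeping subtlety is handling the reference $R$ in Lemma~\ref{lem:small-incremental-entangling} (trivially resolved by absorbing $R$ into the untouched side $X$) and keeping track that $\ree{\Lambda}{\comp\Lambda}$ is evaluated on the reduced state on $A'$; no new technique beyond the three ingredients above is required.
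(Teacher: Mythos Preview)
Your proposal is correct and follows essentially the same route as the paper's proof: decompose the noise to isolate the fully-erased-on-$\Gamma$ branch, use Lemma~\ref{lem:convex-is-close} to transfer the fidelity bound to that branch, bound $\ree{\Gamma}{\comp{\Gamma}X}$ on the erased input via Lemma~\ref{lem:small-incremental-entangling}, then pass to $\ree{\Lambda}{\comp{\Lambda}}$ by monotonicity and to $\xi$ by continuity (Proposition~\ref{prop:ree-properties}). The only cosmetic difference is that you carry the reference $R$ explicitly on the $X$-side of the bipartition during the small incremental entangling step, whereas the paper works with $\ree{\Gamma}{\comp{\Gamma}X}$ directly (since $\cL$ does not touch $R$, both formulations are equivalent).
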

	\begin{proof}
		We write $\comp{\Gamma} = A \setminus \Gamma$, $\Lambda \equiv \Gamma \cap A'$ and $\comp{\Lambda} \equiv A' \setminus \Lambda$. 
		We can write $\rho_{RAX}$ as 
		\begin{align*}
			\rho_{RAX} = (\idchan_R \otimes \cL \circ (\cN_p^{\otimes m} \otimes\idchan_X)) (\sigma_{RAX} ) &=  p^{|\Gamma|}(\idchan_R \otimes \cL \circ (\cN_{\Gamma} \otimes\idchan_X)) (\sigma_{RAX}) \\ & + (1-p^{|\Gamma|})(\idchan_R \otimes \cL \circ (\cM_{\Gamma} \otimes\idchan_X)) (\sigma_{RAX})\
		\end{align*}
		where $\cN_{\Gamma} = (\idty/2^{|{\Gamma}|}\tr_{{\Gamma}})\otimes\cN_{p}^{\otimes |\comp{{\Gamma}}|}$, $\cM_{\Gamma} \in \cptp{\hbt_{A}}$ is some quantum channel.
		For the sake of readability, we write $\rho^{\cL \circ \cN_{\Gamma}} \equiv (\idchan_R \otimes \cL \circ \cN_{\Gamma}) (\sigma_{RAX})$. 
		We can now show, by using Lemma~\ref{lem:convex-is-close}, that the state that suffered complete erasure still has to end up close to the target state:
		\[
		\fid{\rho^{\cL \circ \cN_{\Gamma}}_{RA'}}{\xi_{RA'}} \leq 1 - {\delta}/{p^{|\Gamma|}} \ .
		\]
		From Properties 1 and 2 of Proposition~\ref{prop:ree-properties}, we are able to show that $\Gamma$ will be entangled with the rest, because $\Lambda$ is too, and this will lower bound the entanglement in $\rho^{\cL \circ \cN_{\Gamma}}$:
		\[
		\ree{\Gamma}{\comp{\Gamma}X}_{\rho^{\cL \circ \cN_{\Gamma}}} \geq \ree{\Lambda}{\comp{\Lambda}}_{\rho^{\cL \circ \cN_{\Gamma}}} \geq \ree{\Lambda}{\comp{\Lambda}}_{\xi} - \sqrt{{\delta}/{p^{|\Gamma|}}} |\Lambda| - g(\sqrt{{\delta}/{p^{|\Gamma|}}}) \ .
		\]
		On the other hand, from Lemma \ref{lem:small-incremental-entangling}, and the fact that $\cL$ has depth at most $\Delta$ we have $\ree{\Gamma}{\comp{\Gamma}X}_{\rho^{\cL \circ \cN_{\Gamma}}} \leq 3\Delta|\bdry \Gamma| + \ree{\Gamma}{\comp{\Gamma}X}_{\cN_{\Gamma}(\sigma)} = 3 \Delta|\bdry \Gamma|$.
		We therefore obtain
		\[
		3 \Delta |\bdry \Gamma| \geq  \ree{\Lambda}{\comp{\Lambda}}_{\xi} - \sqrt{{\delta}/{p^{|\Gamma|}}} |\Lambda| - g(\sqrt{{\delta}/{p^{|\Gamma|}}}) \ .
		\]
		
	\end{proof}
	
	\subsection{Overhead theorem}
	
	Here we prove our main result, which consists mainly in combining Lemma \ref{lem:structure-unitary}, and Lemma \ref{lem:depth-bound} harmoniously.
	
	\begin{theorem}
	\label{thm:main-overhead}
		Let $\cC$ be a quantum code encoding $k$ qubits into $n$ qubits. For an error correction module having a $D$-dimensional $O(1)$-local quantum circuit satisfying Definition \ref{def:ec-module}, with width $m$ and achieving a logical error rate $\delta$, we have
		\[
		\frac{m}{k} \in  \Omega\left( \min\left\{ \frac{1}{\Delta} \left(\frac{\log(1/\delta)}{\log(1/p)}\right)^{1/D}, \frac{1}{\delta^{1/8}} \right\} \right) \ .
		\]
	\end{theorem}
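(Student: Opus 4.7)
The plan is to combine the lower bound on entanglement from Lemma \ref{lem:structure-unitary} (any good error-correction module forces the ideal code state to be highly entangled along any partition of $A'$) with the upper bound from Lemma \ref{lem:depth-bound} (one noise layer followed by a depth-$\Delta$ local circuit can create only $O(\Delta|\bdry\Gamma|)$ bits of entanglement across a cut enclosed by $\Gamma$). A geometric partition from Lemma \ref{lem:geometric-partitioning} will mediate between the two, and a careful choice of the piece-size parameter $\lambda$ will balance these competing constraints.

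\emph{Applying the two lemmas.} First, pick a partition $\{\Gamma_i\}_{i=1}^{\ell}$ of $A$ with $|\Gamma_i| \leq \lambda$, $|\bdry \Gamma_i| \in O(\lambda^{(D-1)/D})$, and $\ell \in O(m/\lambda)$, and set $\Lambda_i = \Gamma_i \cap A'$. Write $\xi \equiv \idchan_R \otimes \cU(\Phi_{RL})$. Since $\tr_{\comp{A'}X} \circ [\fW]_p = \cD \circ (\cN_p^{\otimes m} \otimes \idchan_X)$, where $\cD$ absorbs $\fW_1$ together with all subsequent noise/circuit rounds and the final partial trace, Lemma \ref{lem:structure-unitary} applies with error parameter $\delta$ and gives
\[
\sum_i \ree{\Lambda_i}{\comp{\Lambda_i}}_{\xi} \;\geq\; k \,-\, \sum_i \Bigl( 2\sqrt{\delta/p^{|\Lambda_i|}}\,|\Lambda_i| + g\bigl(\sqrt{\delta/p^{|\Lambda_i|}}\bigr) \Bigr).
\]
Dually, viewing the final state as $(\idchan_R \otimes \cL \circ (\cN_p^{\otimes m} \otimes \idchan_X))(\sigma_{RAX})$ with $\cL = \fW_J$ of depth $\Delta$ and $\sigma$ the state right before the last noise layer, Lemma \ref{lem:depth-bound} applied separately to each $\Gamma_i$ gives $\ree{\Lambda_i}{\comp{\Lambda_i}}_{\xi} \leq 3\Delta|\bdry\Gamma_i| + \sqrt{\delta/p^{|\Gamma_i|}}\,|\Lambda_i| + g(\sqrt{\delta/p^{|\Gamma_i|}})$. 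Summing over $i$, using $|\Gamma_i| \leq \lambda$ to uniformize the error arguments, and plugging in the partition bounds yields the master inequality
\[
k \;\leq\; O\!\left(\frac{\Delta m}{\lambda^{1/D}}\right) + O\!\left(m\sqrt{\delta/p^{\lambda}}\right) + O\!\left(\frac{m}{\lambda}\, g\bigl(\sqrt{\delta/p^{\lambda}}\bigr)\right).
\]

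\emph{Optimizing $\lambda$.} Choose $\lambda = \tfrac{1}{2}\log(1/\delta)/\log(1/p)$, so that $\sqrt{\delta/p^{\lambda}} = \delta^{1/4}$ and, using $g(x) \leq 2\sqrt{x}$, the dominant error contribution for small $\delta$ is $O((m/\lambda)\,\delta^{1/8})$. Case split: either this error is at most $k/2$, in which case the master inequality reduces to $k \in O(\Delta m/\lambda^{1/D})$ and gives $m/k \in \Omega(\lambda^{1/D}/\Delta)$; or it exceeds $k/2$, whence $m/k \geq \lambda/(2\delta^{1/8}) \in \Omega(\delta^{-1/8})$ since $\lambda \geq 1$. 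Taking the minimum of the two regimes, and substituting $\lambda = \Theta(\log(1/\delta)/\log(1/p))$, produces the claimed bound.

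\emph{Main obstacle.} The delicate part is the simultaneous control of both error sources. The approximate-indistinguishability factor $\sqrt{\delta/p^{|\Gamma_i|}}$ blows up once $p^{|\Gamma_i|}$ drops below $\delta$, forcing $\lambda$ to stay a strict constant fraction of $\log(1/\delta)/\log(1/p)$, while the geometric term $\lambda^{1/D}/\Delta$ rewards large $\lambda$. The two branches of the minimum in the theorem are precisely the failure modes at opposite ends of this tradeoff. A second subtle point, already built into how Lemma \ref{lem:depth-bound} is invoked, is that only the depth $\Delta$ of a single round of circuits (not the total depth $J\Delta$ of the module) enters the bound; this is what keeps the overhead bound meaningful when the error-correction module is iterated many times.
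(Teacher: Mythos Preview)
Your proof is correct and follows essentially the same approach as the paper's: combine Lemma~\ref{lem:structure-unitary} and Lemma~\ref{lem:depth-bound} over a geometric partition with pieces of size $\lambda = f/2$ (where $f = \log(1/\delta)/\log(1/p)$), then balance the geometric term $\Delta m/\lambda^{1/D}$ against the continuity error. The only cosmetic difference is that you separate the two error contributions and do an explicit case split on whether the error exceeds $k/2$, whereas the paper lumps them into a single $O(m\,p^{f/8}) = O(m\,\delta^{1/8})$ term and uses $1/(A+B) \geq \tfrac{1}{2}\min(1/A,1/B)$ to extract the minimum.
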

	We note that the $\Omega$ notation hides a constant that depends only on the dimension $D$. Observe that we are interested in the regime where the logical error rate $\delta$ goes to zero, $p$ is constant and $\Delta$ is constant. In this case, the bound becomes $\Omega\left(\log(1/\delta)^{1/D} \right)$. For an arbitrary connectivity graph $G$, we would partition $A$ into $\sim \log(1/\delta)$ sets of size $\sim \frac{m}{\log(1/\delta)}$ each having a boundary of size at most $|\bdry|$, and the bound would have the form $m/k \in \Omega(\log(1/\delta)/|\bdry|)$.
	
	\begin{proof}
		
		By the definition of the error correction module \ref{def:ec-module}, 
		we have 
		\begin{align}
		\label{eq:recov}
			F\left( \idchan_{R} \otimes \left( \tr_{\comp{A'}X} \circ [\fW]_{p} \circ (\cU \otimes \cP) \right)(\Phi_{RL}), 
			\idchan_{R} \otimes \cU(\Phi_{RL}) \right) \geq 1-\delta \ ,
		\end{align}
		
		
		Let $\{\Gamma_i\}_{i=1}^{\ell}$ be a partition of $A$, and $\{\Lambda_i\}_{i=1}^{\ell}$ where $\Lambda_i \equiv \Gamma_i \cap A'$ its induced partition on $A'$. We can apply Lemma \ref{lem:structure-unitary} by considering $\cU : \hbt_L \rightarrow \hbt_{A'}$ the encoding isometry of the code, and the existence of a recovery channel $\cD(\, \cdot \, )$ follows from~\eqref{eq:recov}. In fact $\cD$ is simply $\tr_{\comp{A'}X} \circ [\fW]_{p}$ without the first layer of noise. That gives
		\begin{equation}
			\label{eq:ree_lower_bound}
			\sum_i \ree{{\Lambda_i}}{\comp{\Lambda_i}}_{\idchan_R\otimes \cU (\Phi_{RL})} \geq   k - \sum_i 2\sqrt{\delta/{p^{|\Lambda_i|}}}|\AL| + g(\sqrt{\delta/{p^{|\Lambda_i|}}}) \ .
		\end{equation}
		
		Let $\rho_{RAX} = \idchan_{R} \otimes \left( [\fW]_{p} \circ (\cU \otimes \cP) \right)(\Phi_{RL})$ and $\xi_{RA'} = \idchan_R \otimes \cU(\Phi_{RL})$. The condition~\eqref{eq:recov} together with Lemma \ref{lem:depth-bound} implies that we have for all $i$
		\[
				3 \Delta |\bdry \Gamma_i| \geq  \ree{\Lambda_i}{\comp{\Lambda_i}}_{\xi} - \sqrt{{\delta}/{p^{|\Gamma_i|}}} |\Lambda_i| - g(\sqrt{{\delta}/{p^{|\Gamma_i|}}}) \ .
		\]
		This results in 
		\begin{align*}
			3\Delta \sum_i |\bdry \Gamma_i| &\geq   k - \sum_i 2\sqrt{\delta/{p^{|\Lambda_i|}}}|\AL| + g(\sqrt{\delta/{p^{|\Lambda_i|}}}) + \sqrt{\delta/{p^{|\Gamma_i|}}} |\Lambda_i| + g(\sqrt{\delta/{p^{|\Gamma_i|}}}) \ .
		\end{align*}	
		Since $|\Gamma_i| \geq |\Lambda_i|$, and using $f \equiv \log_p(\delta)$, the expression can be further simplified to
		\begin{align*}
			3\Delta \sum_i |\bdry \Gamma_i| &\geq k - \sum_i 3\sqrt{p^{f - |\Gamma_i|}} |\Gamma_i| + 2g(\sqrt{p^{f - |\Gamma_i|}}) \ .
		\end{align*}
		One can specialize the equation above to 
		\begin{align*}
		 3 \Delta \cdot \ell \cdot \max_i |\bdry \Gamma_i| 
		 &\geq k - \ell \cdot \max_i  3 p^{(f-|\Gamma_i|)/2} |\Gamma_i| +  4p^{(f - |\Gamma_i|)/4} \\ 
		 &\geq k - \ell \cdot \max_i  7 p^{(f-|\Gamma_i|)/4} |\Gamma_i|
		\end{align*}
		with $\ell$ the cardinality of the partition $\{\Gamma_i\}_{i=1}^{\ell}$ and using the fact that our choice of $\Gamma_i$ will be such that $f \geq |\Gamma_i|$.
		In $D$-dimensions, it is possible to find a partition $\{\Gamma_i\}_{i=1}^{\ell}$ such that $|\Gamma_i| \leq \lambda$, $|\bdry \Gamma_i| \leq c_1(D) \lambda^{(D-1)/D}$, and $\ell \leq c_2(D) m/\lambda$ for any $\lambda \geq 1$, where $c_1(D), c_2(D)$ are positive constants depending only on $D$, see Lemma \ref{lem:geometric-partitioning}. We take $\lambda = f/2$. With this choice, we have
		\begin{align*}
	7 \ell p^{(f-|\Gamma_i|)/4} |\Gamma_i| \leq 7 c_2(D) \frac{m}{f/2} \cdot p^{f/8} (f/2) = 7 c_2(D) m p^{f/8} \ . 
		\end{align*}
		On the other hand:
		\begin{align*}
		3 \Delta \cdot \ell \cdot \max_i |\bdry \Gamma_i| 
		&\leq 3 \Delta c_2(D) \frac{m}{f/2} c_1(D) (f/2)^{1 - 1/D} \\
		&= 3 c_1(D) c_2(D) \Delta m (f/2)^{-1/D} \ .
		\end{align*}
		Putting these together, we get 
		\begin{align*}
		m (3 c_1(D) c_2(D) \Delta (f/2)^{-1/D} + 7 c_2(D) p^{f/8}) \geq k  \ ,
		\end{align*}
		which leads to 
		\begin{align*}
		\frac{m}{k} \geq \frac{1}{2} \min \left(  \frac{f^{1/D}}{3 c_1(D)c_2(D)\Delta}, \frac{p^{f/8}}{7 c_2(D)} \right) \ .
				\end{align*}
%
%
	\end{proof}

	\section*{Acknowledgements}
	
	We would like to thank Cambyse Rouzé for discussions on quantum circuits in low dimension. We acknowledge funding from the European Research Council (ERC Grant AlgoQIP, Agreement No. 851716)
and from the Plan France 2030 through the project ANR-22-PETQ-0006.

	\bibliographystyle{alpha}
	\bibliography{references}

	\appendix

	\section{Additional lemmas}
	
	This appendix collects lemmas that are used in the proofs.
%
%
	
	\begin{lemma}
		\label{lem:convex-is-close}
		Let $\xi$ be a pure state, and $\rho = \lambda\rho_1 +(1-\lambda)\rho_2$ such that $F(\xi,\rho) \geq 1-\epsilon$, then $F(\xi, \rho_{1}) \geq 1-\epsilon/\lambda$.
	\end{lemma}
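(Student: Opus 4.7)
The plan is to exploit the fact that $\xi$ is pure, so the Uhlmann fidelity reduces to the expectation value $F(\xi,\rho) = \langle \xi | \rho | \xi\rangle$. This makes $F(\xi, \cdot)$ a linear functional on states, which is exactly the handle needed to split a convex combination.

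First I would write
\[
F(\xi,\rho) = \langle \xi | \rho | \xi \rangle = \lambda \langle \xi | \rho_1 | \xi \rangle + (1-\lambda)\langle \xi | \rho_2 | \xi \rangle = \lambda F(\xi,\rho_1) + (1-\lambda) F(\xi,\rho_2).
\]
Then I would use the trivial upper bound $F(\xi,\rho_2) \leq 1$ (the fidelity of any state with a pure state is at most one) to deduce
\[
1 - \epsilon \;\leq\; F(\xi,\rho) \;\leq\; \lambda F(\xi,\rho_1) + (1-\lambda).
\]
Rearranging gives $\lambda F(\xi,\rho_1) \geq \lambda - \epsilon$, hence $F(\xi,\rho_1) \geq 1 - \epsilon/\lambda$, as desired.

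There is really no obstacle here; the only subtlety is noting the purity assumption on $\xi$ is what makes $F(\xi,\cdot)$ linear. Without purity one would only have joint concavity of the fidelity, which goes the wrong way for this kind of ``extraction of a branch'' statement. The $1/\lambda$ blow-up in the error is a natural reflection of the fact that a small fidelity defect in the mixture could be concentrated entirely on the $\rho_1$ component.
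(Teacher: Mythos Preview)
Your proof is correct and matches the paper's argument essentially line for line: write $F(\xi,\rho)=\bra{\xi}\rho\ket{\xi}$ using purity, expand linearly in the convex combination, bound $\bra{\xi}\rho_2\ket{\xi}\leq 1$, and rearrange. There is nothing to add.
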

	\begin{proof}
	By writing $\xi = \proj{\xi}$, we have that
		\begin{align*}
			\fid{ \rho}{\xi} &\geq 1- \epsilon \\
			\bra{\xi} \rho \ket{\xi}&\geq 1- \epsilon \\	
			(1-\lambda)\bra{\xi} \rho_{2} \ket{\xi} + \lambda\bra{\xi} \rho_{1} \ket{\xi} &\geq 1- \epsilon \\
			(1-\lambda)+ \lambda\bra{\xi} \rho_{1} \ket{\xi} &\geq 1- \epsilon \\
			\bra{\xi} \rho_{1} \ket{\xi} &\geq 1- {\epsilon}/{\lambda} \ .
		\end{align*}
	\end{proof}

	\begin{lemma}[Idea taken from \cite{fawzi2015quantum}]
		\label{lem:approximate-markov}
		Let $\rho, \sigma \in \hbt_{ABC}$ such that 
		
		\[
		\sigma = \idchan_A \otimes \cR(\rho_{AB})
		\]
		
		for some $\cR \in \cptp{\hbt_{B}, \hbt_{BC}}$, with $\fid{\rho}{\sigma} \geq 1-\epsilon$
		\[
		\cond{A}{C}{B}_{\rho} \leq 2\sqrt{\epsilon} |A| + g(\sqrt{\epsilon}).
		\]
	\end{lemma}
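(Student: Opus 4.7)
My plan is to exploit the structural property of $\sigma$ --- namely that it arises from $\rho_{AB}$ by applying a channel $\cR$ that acts only on $B$ (extending it to $BC$) --- to invoke data processing on the conditional entropy, and then combine with continuity of the conditional entropy to compare against $\rho$.

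First, I would observe that the hypothesis $\sigma_{ABC} = \idchan_A \otimes \cR(\rho_{AB})$ with $\cR \in \cptp{\hbt_B, \hbt_{BC}}$ places us in exactly the setting of the data processing inequality for conditional entropy under a channel acting on the conditioning system: since conditional entropy satisfies $S(A|B)_\omega \le S(A|B')_{(\idchan_A \otimes \Phi)(\omega)}$ for any $\Phi \in \cptp{\hbt_B,\hbt_{B'}}$ (a direct consequence of the monotonicity of $D(\omega_{AB}\|\idty_A \otimes \omega_B)$), applying this with $\Phi = \cR$ and $B' = BC$ gives
\[
S(A|B)_{\rho} \le S(A|BC)_{\sigma}.
\]
This is the key structural input; the remainder is then continuity.

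Next, using the fidelity bound $\fid{\rho}{\sigma} \ge 1-\epsilon$ together with the inequality $\norm{\rho - \sigma} \le 2\sqrt{\epsilon}$ from the preliminaries, I would invoke the Winter continuity bound for conditional entropy (Lemma~2 of \cite{winter2016tight}, which was already cited and used in Proposition~\ref{prop:coh-preperties}) to obtain
\[
\bigl| S(A|BC)_{\rho} - S(A|BC)_{\sigma} \bigr| \le 2\sqrt{\epsilon}\,|A| + g(\sqrt{\epsilon}).
\]

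Finally, combining the two ingredients and using the definition $\cond{A}{C}{B}_{\rho} = S(A|B)_{\rho} - S(A|BC)_{\rho}$:
\[
\cond{A}{C}{B}_{\rho} \;\le\; S(A|BC)_{\sigma} - S(A|BC)_{\rho} \;\le\; 2\sqrt{\epsilon}\,|A| + g(\sqrt{\epsilon}),
\]
which is the desired bound. There is no real obstacle here: the only non-obvious step is recognizing that the recovery structure of $\sigma$ is precisely what makes the data processing inequality on conditional entropy give a \emph{lower} bound on $S(A|BC)_\sigma$ (in terms of $S(A|B)_\rho$); once that is in place, the continuity estimate closes the gap and routine manipulation finishes the proof.
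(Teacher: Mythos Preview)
Your proof is correct and is essentially identical to the paper's: the paper writes the same two-step argument in terms of the coherent information $\coh{A}{B}=-S(A|B)$, using right-monotonicity (Proposition~\ref{prop:coh-preperties}, item 1) in place of your data-processing inequality for conditional entropy, and then the same continuity bound (Proposition~\ref{prop:coh-preperties}, item 2). The only difference is the sign convention in the notation.
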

	\begin{proof}
		We have
		
		\begin{align*}
			\cond{A}{C}{B}_{\rho} & = \coh{A}{BC}_{\rho} - \coh{A}{B}_{\rho_{AB}} \\ 
			&\leq \coh{A}{BC}_{\rho} - \coh{A}{BC}_{ \idchan_A \otimes \cR(\rho_{AB})} \\
			&= \coh{A}{BC}_{\rho} - \coh{A}{BC}_{\sigma} \\
			& \leq 2\sqrt{\epsilon} |A| + g(\sqrt{\epsilon}) \ .
		\end{align*}
		where the first inequality comes from the monotonicity of the coherent information, and the last comes from the continuity of the coherent information, see Proposition \ref{prop:coh-preperties}.
	\end{proof}
	
	\begin{lemma}
		\label{lem:coh-lower-bounds-ree}
		Let $\rho \in \den{\hbt_{A}\otimes \hbt_{B}}$, then 
		
		\[
		\coh{A}{B}_{\rho} \leq \ree{A}{B}_{\rho}
		\]
	\end{lemma}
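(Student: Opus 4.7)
The plan is to prove the inequality $D(\rho \| \sigma) \geq \coh{A}{B}_{\rho}$ for every separable state $\sigma \in \sep{\hbt_A}{\hbt_B}$, after which taking the minimum over $\sigma$ yields the lemma. Recalling the definitions in the paper, $\coh{A}{B}_{\rho} = -\ent{A|B}_{\rho} = \relent{\rho_{AB}}{\idty_A \otimes \rho_B}$, so the goal is essentially to upper bound the ``divergence from $\idty_A\otimes\rho_B$'' by the divergence from any separable state.

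First I would establish the following auxiliary inequality: for every state $\tau_B \in \den{\hbt_B}$,
\[
\relent{\rho_{AB}}{\idty_A \otimes \tau_B} \geq \coh{A}{B}_{\rho} \ .
\]
This follows by direct computation, since $\relent{\rho_{AB}}{\idty_A \otimes \tau_B} = -\ent{\rho_{AB}} - \tr\bigl(\rho_B \log \tau_B\bigr)$, and the difference with $\coh{A}{B}_{\rho} = -\ent{\rho_{AB}} + \ent{\rho_B}$ equals $\relent{\rho_B}{\tau_B} \geq 0$ by Gibbs' inequality (or equivalently, non-negativity of relative entropy).

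Next I would reduce the separable case to this product-structured case. Given $\sigma = \sum_i p_i \sigma_{A,i} \otimes \sigma_{B,i}$, note that $\sigma_{A,i} \leq \idty_A$ for each $i$, so summing yields the operator inequality $\sigma \leq \idty_A \otimes \sigma_B$, where $\sigma_B = \tr_A \sigma$. Since the matrix logarithm is operator monotone, $\log \sigma \leq \log(\idty_A \otimes \sigma_B)$, and therefore
\[
\relent{\rho}{\sigma} = -\ent{\rho} - \tr(\rho \log \sigma) \geq -\ent{\rho} - \tr\bigl(\rho \log(\idty_A \otimes \sigma_B)\bigr) = \relent{\rho}{\idty_A \otimes \sigma_B} \ .
\]
Combining this with the auxiliary inequality applied to $\tau_B = \sigma_B$ gives $\relent{\rho}{\sigma} \geq \coh{A}{B}_{\rho}$. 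Minimizing over separable $\sigma$ finishes the proof.

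There is no real obstacle here; the only point worth being careful about is the use of operator monotonicity of the logarithm, which is a standard fact. If one preferred to avoid it, one could instead argue using monotonicity of $\relent{\cdot}{\cdot}$ under the measurement that reads out the classical label $i$ in the decomposition of $\sigma$, but the direct route via $\sigma \leq \idty_A \otimes \sigma_B$ seems cleanest.
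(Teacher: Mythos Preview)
Your proof is correct and follows essentially the same route as the paper's: both use the operator inequality $\sigma \leq \idty_A \otimes \sigma_B$ to get $\relent{\rho}{\sigma} \geq \relent{\rho}{\idty_A \otimes \sigma_B}$, and then the fact that $\rho_B$ minimizes $\tau_B \mapsto \relent{\rho}{\idty_A \otimes \tau_B}$ to conclude. The paper cites Wilde (Proposition~11.8.2 and Equation~11.127) for these two steps, while you spell them out directly via operator monotonicity of the logarithm and nonnegativity of $\relent{\rho_B}{\tau_B}$; the arguments are otherwise identical.
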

	\begin{proof}
		Write $\sigma =\sum_i p_i \sigma_{A,i} \otimes \sigma_{B,i}, \sigma_{A,i} \in \hbt_{A}, \sigma_{B,i} \in \den{\hbt_{B}} $ the state for which 
		
		\[
		\ree{A}{B}_\rho = \relent{\rho}{\sigma} = \min_{\tau \in \sep{A}{B}} \relent{\rho}{\tau}
		\]
		
		Then we have $\relent{\rho}{\sigma} \geq \relent{\rho}{\idty \otimes (\sum_i p_i \sigma_{B,i})}$, as $\idty \otimes (\sum_i p_i \sigma_{B,i}) \geq \sigma$. This then gives 
		
		\[
		\ree{A}{B}_\rho  \geq \relent{\rho}{\sigma} \geq \relent{\rho}{\idty \otimes (\sum_i p_i \sigma_{B,i})} \geq \relent{\rho}{\idty \otimes \rho_B} = \coh{A}{B}_{\rho}
		\]
		
		where the first inequality is from the definition of the REE, the second is from Proposition 11.8.2 of \cite{wilde2017quantum}, and the last is from Equation 11.127 of \cite{wilde2017quantum}.
	\end{proof}

	\subsection{Geometric embeddings}

	\begin{definition}
		Let $G = (V,E)$ be a graph. Consider a partition $A \sqcup S \sqcup B$. Then $S$ is an $\alpha$-separator, $1/2 \leq \alpha < 1$, if 
		
		\begin{enumerate}
			\item $E$ contains no edges between $A$ and $B$
			\item $|A|,|B| \leq \alpha |G|$
		\end{enumerate}
		
	\end{definition}
	
	\begin{definition}
		A graph $G=(V,E)$ said to be $(f, \alpha)$-separable if every subgraph $G' \subset G$ has an $\alpha$-separator $S'$ with $|S'|\leq f(|G'|)$.
	\end{definition}
	
	\begin{lemma}
		
		\label{lem:geometric-partitioning}
		Let $G = (V,E)$, and $\eta: V \rightarrow \R^D$ such that 
		
		\begin{enumerate}
			\item $\forall u,v \in V,u \neq v , \| \eta(u) - \eta(v) \|_2 \geq 1$
			\item $\forall (u,v) \in E, \|\eta(u) - \eta(v)\|_2 \leq r$
		\end{enumerate}
		
		Then for any $\lambda \in \N$, there exists a partition $\{\Lambda_i\}_{i = 1}^{l}$ of $V$ such that $|\Lambda_i|\leq \lambda$, $|\bdry \Lambda_i| \in O(r \cdot \lambda^{(D-1)/D})$, $l \in O(|V|/\lambda) $ 
		
	\end{lemma}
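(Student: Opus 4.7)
The plan is to partition $\mathbb{R}^D$ by an axis-aligned grid of hypercubes of side length $s := c_D \cdot \lambda^{1/D}$ for a constant $c_D$ depending only on $D$, apply a generic infinitesimal shift so that no vertex image lies on a face, and take $\Lambda_i$ to be the set of vertices $v$ with $\eta(v)$ in the $i$-th nonempty cube.

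To bound $|\Lambda_i|$, I would use the pairwise-distance condition: the open Euclidean balls $B(\eta(v), 1/2)$ are pairwise disjoint, so by a standard volume-packing argument a hypercube of side $s$ contains at most $O((s+1)^D)$ such centers. Choosing $c_D$ small enough then gives $|\Lambda_i| \leq \lambda$. For the boundary, if $v \in \partial \Lambda_i$ then $v$ has a neighbor in another cube; by hypothesis this neighbor lies at Euclidean distance at most $r$, so $\eta(v)$ lies in the $r$-neighborhood of the cube boundary. That $r$-shell has volume $O(r\cdot s^{D-1})$ in the relevant regime $s \gtrsim r$ (with the opposite regime handled by the trivial bound $O(s^D)$), and the same packing argument yields $|\partial \Lambda_i| \in O(r \cdot s^{D-1}) = O(r \cdot \lambda^{(D-1)/D})$.

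The subtle step is bounding the number of parts by $O(|V|/\lambda)$: a naive enumeration of nonempty cubes can be as bad as $l \leq |V|$ when vertices are spatially sparse. My plan is to follow the cube decomposition by a cleanup pass that iteratively merges any cube with fewer than $\lambda/2$ vertices into an adjacent nonempty cube whenever the union still contains at most $\lambda$ vertices. Each merge inflates the boundary of the affected part by only a constant factor (two pieces of $r$-shell being combined), so the $O(r \lambda^{(D-1)/D})$ bound survives. After termination, every remaining part contains at least $\lambda/2$ vertices, except for components of nonempty cubes isolated from all other nonempty cubes at distance $s$ in the grid; such isolated components can be absorbed using a further packing argument since, by pairwise-distance $\geq 1$, the number of truly isolated small clusters is itself $O(|V|/\lambda)$. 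This delivers $l \leq 2|V|/\lambda + O(1)$.

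The main obstacle is exactly this last step: the merging argument needs to be executed so that the boundary bound, the size bound, and the count bound hold simultaneously and uniformly in $\lambda$. A cleaner alternative, hinted at by the definitions of $\alpha$-separator and $(f,\alpha)$-separability introduced just before the lemma, is to invoke the Miller--Thurston--Vavasis-style separator theorem (any $D$-dimensional geometric graph with the given packing property is $(C_D n^{(D-1)/D}, \alpha)$-separable for some $\alpha < 1$) and recurse: bisect until every part has size at most $\lambda$. The recursion tree has depth $O(\log(|V|/\lambda))$, produces $O(|V|/\lambda)$ leaves, and a standard telescoping of separator sizes along each root-to-leaf path bounds the boundary of each leaf by $O(\lambda^{(D-1)/D})$, with the factor $r$ reappearing through the edge-length bound.
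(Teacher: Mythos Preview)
Your alternative route is exactly what the paper does: it invokes the geometric separator theorem of Miller--Teng--Thurston--Vavasis (Corollary~3.8 of their paper) to certify that $G$ is $(f,\alpha)$-separable with $f(n)=O(r\, n^{(D-1)/D})$ and $\alpha=\tfrac{D+1}{D+2}$, and then cites Lemma~2 of Henzinger--Klein--Rao--Subramanian, which packages precisely the ``recurse until every piece has size at most $\lambda$, then telescope the separator sizes along each root-to-leaf path'' argument you sketch. So the second half of your proposal is the paper's proof.

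The grid approach, by contrast, has a genuine gap at the very point you flag. The assertion that ``the number of truly isolated small clusters is itself $O(|V|/\lambda)$'' via the pairwise-distance hypothesis is false: place $|V|$ vertices in $\mathbb{R}^D$ at mutual Euclidean distance much larger than $s$. Every vertex then sits alone in its own cube, no two occupied cubes are grid-adjacent, and you are left with $|V|$ isolated singleton parts, not $O(|V|/\lambda)$. The condition $\|\eta(u)-\eta(v)\|_2\ge 1$ is a \emph{lower} bound on inter-vertex spacing and carries no upper bound, so no packing argument can force the vertices to cluster. A correct rescue in the regime $s>r$ is to note that a grid-isolated cube then has no $G$-edges leaving it, hence $|\partial\Lambda_i|=0$, so such small pieces can be grouped arbitrarily into blocks of size $\lambda$ without harming the boundary bound; but this is not the argument you wrote, and the regime $s\le r$ would still require separate treatment. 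Since you already have the separator route, that is the one to keep.
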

	
	\begin{proof}
		
		From Corollary 3.8 \cite{miller1991unified}, $G$ is $(f,\alpha)$-separable, for  $f(n) \in O(r \cdot n^{(D-1)/D})$, and $\alpha = \frac{D+1}{D+2}$. Using Lemma 2 of \cite{henzinger1997faster}, this allows us to guarantee the existence of the partition $\{\Lambda_i\}_{i = 1}^{i=l}$ with the desired properties.
	\end{proof}

\end{document}